\DeclarePairedDelimiter\floor{\lfloor}{\rfloor}
\newtheorem{definition}{Definition}
\newtheorem{lemma}{Lemma}
\newtheorem{theorem}{Theorem}
\newtheorem{corollary}{Corollary}
\tikzset{algpxIndentLine/.style={draw=lightgray}}
\definecolor{lightgray}{HTML}{D1DBE4}
\definecolor{green}{HTML}{B9C89F}
\definecolor{red}{HTML}{F19F9E}
\definecolor{yellow}{HTML}{F6DA71}
\definecolor{blue}{HTML}{ACE7EF}
\definecolor{darkyellow}{HTML}{FFA600}
\definecolor{darkblue}{HTML}{057DCD}
\definecolor{darkpurple}{HTML}{58508D}
\definecolor{darkred}{HTML}{FF6361}
\newcommand{\sysname}{\textsc{Tyche}\xspace}
\newcommand{\sysnamezkp}{\textsc{Tyche-ZKP}\xspace}
\newcommand{\sysnamewul}{\textsc{Tyche-Coop}\xspace}
\newcommand{\x}{\mathbf{x}}
\begin{document}

\title{\sysname: Collateral-Free Coalition-Resistant Multiparty Lotteries\\with Arbitrary Payouts}


\author{{\rm Quentin Kniep}\\
ETH Zurich\\
Switzerland\\
qkniep@ethz.ch
\and
{\rm Roger Wattenhofer}\\
ETH Zurich\\
Switzerland\\
wattenhofer@ethz.ch}

\maketitle

\begin{abstract}
We propose \sysname,
a family of protocols for performing practically (as well as asymptotically) efficient multiparty lotteries,
resistant against aborts and majority coalitions.
Our protocols are based on a commit-and-reveal approach,
requiring only a collision-resistant hash function.

All our protocols use a blockchain as a public bulletin board and for buy-in collection and payout settlement.
Importantly though, they do not rely on it or any other third party for providing randomness.
Also, participants are not required to post any collateral beyond their buy-in.
Any honest participant can eventually settle the lottery,
and dishonest behavior never reduces the winning probability of any honest participant.

Further, we adapt all three protocols into anonymous lotteries,
where (under certain conditions) the winner is unlinkable to any particular participant.
We show that our protocols are secure, fair, and some preserve the participants' privacy.

Finally, we evaluate the performance of our protocols, particularly in terms of transaction fees,
by implementing them on the Sui blockchain.
There we see that per user transaction fees are reasonably low and our protocols could potentially support millions of participants.
\end{abstract}


\section{Introduction}
\label{sec:introduction}

Lotteries are interesting applications in the blockchain setting.
They are interesting in their own right, since they show the expressiveness of blockchains.
Also, they may be used as a mechanism for leader election in proof-of-stake systems,
or as part of higher-level applications that require fair (weighted) random distribution of some resource.
Lotteries can also be used with negative utilities to solve the leader aversion problem,
i.e., assign some duty randomly to one or some of the participants.

Various simplifying assumptions are often made to realize lotteries in the blockchain setting.
This is also true, more generally, for any applications requiring unbiasable randomness.
Many such applications running on blockchains use committee signatures from the consensus protocol or a third-party committee to provide the randomness~\cite{bitcoin_source_of_randomness,ethereum_lotteries_1,ethereum_lotteries_2}.
In many cases, this allows the block producer to bias the randomness output by deciding whether to censor a given block or which signatures to include~\cite{malleability_of_blockchain_entropy}.
Winning a large-scale lottery could be worth orders of magnitude more than the protocol rewards of any specific block.
Thus, it might be feasible for well-funded lottery participants to bribe block producers or randomness committee members.
To at least partially prevent this, heavy cryptographic primitives such as distributed verifiable random functions~\cite{vrf,distributed_vrf}.
Given any (practically) unbiasable source of randomness, however,
a lottery can be easily realized by generating a random permutation of the participants.
Other protocols do rely on randomness from participants but require a majority of honest participants~\cite{large-scale_p2p_lottery}.

Also, some earlier implementations of decentralized lotteries~\cite{fair_bitcoin_protocols,bitcoin_secure_mpc,constant_deposit_bitcoin_lotteries}
and other applications using distributed randomness~\cite{decentralized_poker,fastkitten}
rely on the fact that users pay a collateral in addition to their buy-in.
Usually the collateral is at least proportional to the number of participants.
Behavior deviating from the protocol can then be heavily punished.

\subsubsection*{Our Contributions}

In this work, we propose \sysname, a family of multiparty lotteries supporting arbitrary payout functions.
They provide general constructions from any correct monotonic shuffling network, which we also define.
These protocols require a public bulletin board, which can be instantiated by a smart contract on a blockchain.
Importantly, they make minimal assumptions, using the blockchain only for settlement and broadcast,
not as a source of randomness of any kind.
Instead, the outcome relies only on randomness chosen by the participants.
Like the most closely related protocols by Miller and Bentov~\cite{zero_collateral_lotteries} and Ballweg et al.~\cite{purelottery},
the \sysname family of protocols does not require any additional collateral to be posted.
Misbehaving participants are simply considered to be resigning and are removed from the lottery accordingly.

\section{Related Work}
\label{sec:related-work}

In general, Cleve~\cite{limits_on_secure_coin_flips} showed that any $r$-round coin flipping protocol allows the adversary to bias the output by at least $\Omega(1/r)$.
Similar bounds have been explored specific to the multiparty case~\cite{fair_coin_flipping}.
However, this impossibility can be circumvented by allowing biasing in only one direction.


Notions of fairness in multiparty coin flipping have been formally defined by Chung et al.~\cite{game_theoretic_fairness}.
In addition to fairness,
some lotteries~\cite{banfel,fplotto,anonymous_pos_lottery} also provide certain privacy guarantees to their participants.

In the blockchain setting, randomness can be derived from a committee~\cite{bitcoin_source_of_randomness,ethereum_lotteries_1,ethereum_lotteries_2}.
The committee serves as a third-party source of randomness.
It can for example be realized by relying on the validators of the underlying blockchain.
Many applications in practice are of this kind.
Other protocols use user-supplied randomness, like~\cite{fastkitten,fairlotto,prob_smart_contracts,large-scale_p2p_lottery,decentralized_poker,bitcoin_secure_mpc,fair_bitcoin_protocols}.
However, these rely on at least one of
(i) users posting collateral that is seized in case of misbehavior,
or (ii) no collusion between participants and a semi-honest lottery agency or between a (large) majority of participants.

For perfect game-theoretic fairness of such protocols,
Chung et al.~\cite{fair_leader_election} showed a communication lower bound of $\Omega(\log n)$ sequential rounds.
On the other hand, they also showed a way of closely approximating fairness in $\mathcal{O}(\log \log n)$ rounds.

We will define shuffling network,
which are closely modeled after the notion of sorting networks from distributed computing theory~\cite{sorting_networks}.
Efficient constructions (especially considering depth) have previously been shown~\cite{random_sorting_networks,optimal_sorting_networks,improved_logn_depth_sorting_networks}.

First, Delmonilo et al.~\cite{two_player_lotteries} proposed a zero-collateral lottery for two players (i.e. coin flip).
Previously, Miller and Bentov~\cite{zero_collateral_lotteries} proposed a protocol for realizing lotteries in the blockchain setting.
Their protocol does not require the participants to post any collateral.
However, it still has quite a few restrictions:
(i) number of participants needs to be a power of two,
(ii) only a single price is paid out,
(iii) all participants have equal probability of winning.
Ballweg et al.~\cite{purelottery} generalized this approach to non power-of-two numbers of participants.
Ballweg~\cite{purelottery_thesis} also partially alleviated the restrictions regarding winning probabilities and the multi-winner setting.

\section{Preliminaries}
\label{sec:preliminaries}

In this section, we introduce some fundamental mathematical notation and cryptographic definitions and constructions used throughout this work.
\\
\\
\textbf{Notation}.
For any $n \in \mathbb{N}$ we denote by $[n]$ the set $\{1,2,\ldots,n\}$.
If not otherwise specified, $\log$ refers to the base-2 logarithm.
We universally use $\lambda$ to denote a security parameter.
By $\textsf{poly}(x_1, \ldots, x_k)$ we denote values which are chosen as a polynomial function of the $k$ variables $x_i$.

\subsection{Cryptographic Definitions}

A fundamental cryptographic primitive used in \sysname protocols are cryptographic hash functions.

\begin{definition}[Cryptographic Hash Function]
    Given a security parameter $\lambda$.
    A cryptographic hash function is a function $h: \{0, 1\}^{*} \rightarrow \{0, 1\}^l$ with $l \ge \lambda$,
    which  achieves the following:
    \begin{itemize}
        \item \textbf{Preimage Resistance:} Given a hash $y \in \{0, 1\}^l$.
            It is infeasible to find an $x$ such that $h(x) = y$.
        \item \textbf{Second-preimage Resistance:} Given a hash $y \coloneqq h(x)$.
            It is infeasible to find any $x' \neq x$ with $h(x') = h(x) = y$.
    \end{itemize}
    We call it a strong (or collision resistant) cryptographic hash function if it also achieves:
    \begin{itemize}
        \item \textbf{Collision Resistance:} It is infeasible to find any $x_1, x_2$,
            such that $x_1 \neq x_2$ and $h(x_1) = h(x_2)$.
    \end{itemize}
    Collision resistance additionally requires $l \ge 2\lambda$.
\end{definition}

The central building block of \sysname protocols is a commitment scheme,
which in turn --- in the specific instantiation we present here --- is built on a cryptographic hash function.

\begin{definition}[Commitment Scheme]
    A commitment scheme provides two algorithms: $\mathsf{Commit}(x) \rightarrow c$ and $\mathsf{Verify}(c, x)$.
    We define the following security properties:
    \begin{itemize}
        \item \textbf{Perfectly Hiding:} It is information theoretically impossible for even a computationally unbounded adversary to determine input $x$ from commitment $c$.
        \item \textbf{Perfectly Binding:} For any commitment $c$ over an input $x$, there exists no $x' \neq x$ in the input space for which $\mathsf{Verify}(c, x')$ holds.
        \item \textbf{Computationally Hiding:} Under some cryptographic hardness assumption it is infeasible to determine input $x$ from commitment $c$.
        \item \textbf{Computationally Binding:} Under some cryptographic hardness assumption it is infeasible to open a commitment $c$ initially calculated over input $x$ to an input $x' \neq x$.
    \end{itemize}
\end{definition}

A commitment scheme may be (a) perfectly hiding and computationally binding or (b) perfectly binding and computationally hiding.
However, it can by definition not be both perfectly hiding and perfectly binding.

Using cryptographic hash functions in the random oracle model it is trivial to construct a commitment scheme,
which is at least computationally hiding and computationally binding.
Even in general, one-way functions imply a corresponding construction of a commitment scheme~\cite{commitment_from_prng,prng_from_owf}.

Achieving the strongest notions of privacy (see \cref{sec:unlinkable-lotteries})
requires the use of heavier cryptographic primitives, specifically zero-knowledge proofs.

\begin{definition}[Zero-Knowledge Proof]
\label{def:zkp}
    A non-interactive zero-knowledge proof scheme provides two algorithms:
    \begin{itemize}
        \item $\mathsf{Prove}(x, w) \rightarrow \pi$:
            Generate a proof $\pi$, given a witness $w$ making the statement $x$ evaluate to true.
        \item $\mathsf{Verify}(x, \pi)$:
            Verify that proof $\pi$ is valid for statement $x$.
    \end{itemize}
    It has to satisfy the following security properties:
    \begin{itemize}
        \item \textbf{Completeness:} For any valid combination of statement $x$ and witness $w$, $\pi \coloneqq \mathsf{Prove}(x, w)$ passes verification.
        \item \textbf{Soundness:} It is infeasible to generate a proof $\pi$ for a statement $x$ without knowing a valid witness $w$.
        \item \textbf{Zero-Knowledge:} $\pi$ reveals nothing about the secret input $sec$,
            apart from that it satisfies $circuit$.
    \end{itemize}
    We call it a non-malleable zero-knowledge proof if the scheme achieves the following stronger security property.
    \begin{itemize}
        \item \textbf{Non-Malleability:} It is infeasible for an adversary to transform a valid proof $\pi$ for a statement $x$
            into a valid proof $\pi'$ for a different statement $x'$.
    \end{itemize}
\end{definition}

Examples of practical schemes for zero-knowledge proofs are those by Groth~\cite{groth_zkp} and Gabizon et al.~\cite{plonk}.
However, these systems rely on hardness assumptions such as the discrete-logarithm problem and are thus not post-quantum secure.
ZK-STARKs~\cite{zk_stark}, which only rely on collision-resistant hash functions and the random oracle model,
might provide a viable alternative in a post-quantum future.
A practical example of ZK-STARK would be Zilch~\cite{zilch_zkp}.

\section{Collateral-Free Lotteries}
\label{sec:collateral-free-lotteries}

In this section, we define the model assumptions and properties we want from our lottery protocols.
Further, we reiterate constructions from the most closely related previous work, which operate in the same setting.
Later sections will build upon the ideas presented here.

\subsection{Model}
\label{sec:model}

Assume the existence of a public bulletin board.
Main functionality of the bulletin board is the ability to publish and read authenticated messages.
The messages are also assigned low-fidelity monotonic timestamps.
This allows protocols to roughly split time artificially into distinct rounds based on these timestamps.
Specifically, each message, after being posted, can be verified by everyone to belong to a given round.
Some level of censorship-resistance needs to be provided by the public bulletin board.
It should allow any participants who honestly follow the protocol to publish their message within the correct round.
Another functionality that is assumed, is a way for participants of a protocol to deposit funds.
These funds are later directly sent to or withdrawable by certain participants,
as decided by a protocol-defined deterministic settlement function,
which takes as input the state of the public bulletin board.
All of this functionality can be provided by a strongly programmable blockchain.
For example, it can be realized through smart contracts in Ethereum~\cite{ethereum}
or objects in Sui~\cite{sui_lutris} programmed via Sui Move~\cite{sui_move}.

In this section and the following, we consider constructions of lottery protocols that achieve all the properties defined below.

\begin{definition}[Security]
\label{def:security}
    Under some cryptographic hardness assumptions,
    it is computationally infeasible to impersonate another player or forge a winning lottery ticket.
\end{definition}

\begin{definition}[Fairness]
\label{def:fairness}
    Let $p_i$ be the expected payout of an honest participant $i$ if all participants follow the protocol.
    A fair lottery protocol ensures the following:
    Conditioned on any possible misbehavior of the other participants,
    participant $i$ is guaranteed an expected payout of at least $p_i$.
\end{definition}

For the example of a single winner and all $n$ participants having equal winning probability,
this just means that any honest participant has at least a $1/n$ chance of winning.
Importantly, this allows misbehaving participants to lose out in favor of correct participants.
This enables us to avoid the theoretical impossibility results about unbiasability.


\begin{definition}[Public Verifiability]
\label{def:public-verifiability}
    The winner of the lottery is determined as the result of a deterministic function of the data published to the bulletin board.
\end{definition}

Since this verification likely needs to be performed within a smart contract,
it is also highly relevant how computationally cheap it is to evaluate.
For example, this is a factor in deciding on the cryptographic primitives to use.

\begin{definition}[Liveness]
\label{def:liveness}
    Any honest participant can drive the protocol towards completion on their own.
\end{definition}

This notion of liveness prevents malicious participants from deadlocking an honest participant out of winning the lottery.
At the same time it prevents the use of any semi-honest party, acting as the lottery agency,
whose interaction would be required for termination.

\begin{definition}[Collateral-Freeness]
\label{def:collater-freeness}
    Participants do not need to post collateral to pay for possible penalties for misbehavior.
\end{definition}

Specifically, this means that participants need only deposit the ticket price for participating in the lottery.

\begin{lemma}[Abort Resistance]
    Given a protocol that achieves fairness (\cref{def:fairness}) and liveness (\cref{def:liveness}).
    A participant aborting this protocol at any point
    (i) may decrease but never increase their own payout
    and (ii) may never decrease the payout of any honest participant.
\end{lemma}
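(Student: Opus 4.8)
The plan is to derive both claims directly from the two hypotheses, treating aborting as a special case of the misbehavior that fairness already quantifies over. First I would set up notation: fix the honest participant $i$ whose guarantee we track, and separately consider an arbitrary participant $j$ (possibly $i$ itself, possibly another) who aborts at some point during the execution. The key conceptual move is to observe that an abort is not a separate phenomenon outside the protocol model but simply one particular \emph{strategy} available to a (now-misbehaving) participant --- namely, ceasing to post further messages to the bulletin board after some round. Crucially, by liveness (\cref{def:liveness}) the protocol still terminates and produces a well-defined settlement even when $j$ stops participating, so the payouts $p_k$ remain well-defined for every participant in the aborted execution.

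For part (ii), I would argue as follows. Let $i$ be any honest participant and let $j \neq i$ be the aborting party. From $i$'s perspective, $j$'s abort is just one possible instance of ``misbehavior of the other participants.'' Fairness (\cref{def:fairness}) guarantees that conditioned on \emph{any} such misbehavior, $i$'s expected payout is still at least $p_i$, the value $i$ would obtain if everyone followed the protocol. Since aborting falls within the scope of that universal quantifier over misbehaviors, $i$'s expected payout cannot drop below $p_i$; in particular it cannot be decreased by $j$'s abort. This is the short direction and requires only that we correctly classify aborting as admissible misbehavior in the fairness definition.

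For part (i), I would again invoke fairness but now applied to the aborting participant $j$ in two comparison executions. If $j$ follows the protocol honestly, it is guaranteed expected payout $p_j$. If instead $j$ aborts, then $j$ is itself now a misbehaving participant; I would reason about $j$'s realized payout in the aborted run. The decrease-but-never-increase claim for $j$'s own payout should follow from combining (a) the fairness guarantee viewed from the other honest participants --- their expected payouts are each protected at their baseline $p_k$ regardless of $j$'s abort --- together with (b) a conservation or monotonicity property of the settlement: the total payout pool is fixed (it is the sum of buy-ins, since the protocol is collateral-free by \cref{def:collater-freeness}), so any expected payout that $j$ forfeits must be redistributed to others and cannot accrue back to $j$. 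Thus $j$'s expected payout after aborting is at most its honest baseline, giving ``may decrease but never increase.''

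The main obstacle will be part (i), specifically making rigorous the claim that aborting \emph{cannot increase} $j$'s own payout. Fairness as stated is a one-sided lower bound guaranteeing honest participants \emph{at least} their baseline; it says nothing directly about an \emph{upper} bound on a misbehaving participant's payout. To close this gap I expect to need the conservation argument sketched above --- formalizing that the settlement function redistributes a fixed pool, so that the lower bounds protecting every \emph{other} honest participant force an upper bound on what remains available to the aborting party. The delicate point is handling the case where other participants also misbehave: I would need to argue that the aborting participant $j$ can be analyzed against the baseline by treating the other parties' behavior as fixed and applying fairness from $j$'s own standpoint (had $j$ stayed honest), so that $j$'s honest-play expectation $p_j$ upper-bounds what $j$ can obtain by deviating via an abort. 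I would make sure to state explicitly that this uses collateral-freeness to pin down that no external funds enter or leave the pool, which is what makes the redistribution argument airtight.
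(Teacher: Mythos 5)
First, a point of reference: the paper states this lemma \emph{without any proof} (it is presented as an immediate consequence of the definitions), so the comparison is against the argument it implicitly relies on. Your part (ii) is exactly that implicit argument and is sound: an abort is one admissible instance of ``misbehavior of the other participants,'' liveness (\cref{def:liveness}) guarantees settlement still occurs so payouts remain well-defined, and fairness (\cref{def:fairness}) then lower-bounds each honest participant's expected payout at its all-honest baseline. Your instinct that part (i) needs more than fairness is also correct, and your conservation argument does close it in the case where \emph{all other} participants are honest: writing $W$ for the pot and $p_k$ for baselines with $\sum_k p_k = W$, the aborter $j$ receives at most $W - \sum_{k \neq j} p_k = p_j$. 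One attribution nitpick: what pins down conservation is the payout-function normalization $\sum_{i=1}^n p(i) = 1$ (so settlement exhausts the pot), not collateral-freeness (\cref{def:collater-freeness}), which only says no funds beyond the buy-in enter.

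The genuine gap is your treatment of the multi-deviator case in part (i). ``Applying fairness from $j$'s own standpoint (had $j$ stayed honest)'' yields a \emph{lower} bound on $j$'s payout under honest play; it does not upper-bound what $j$ obtains by aborting, which is the direction you need, so that step is a non sequitur. And the gap is substantive, not presentational: fairness, liveness, and conservation together do not exclude an abort increasing $j$'s payout when another party also misbehaves. Toy example: three players, pot $3$, all-honest expectations $1$ each, and a pathological settlement rule that pays $(1, 0, 2)$ whenever player $2$ deviates in some recognizable way and player $3$ aborts. The honest player $1$ still receives its guarantee, so fairness and liveness hold, yet player $3$'s abort strictly gained --- the gain is funded entirely out of another misbehaving party's share, which fairness does not protect. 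So at the abstraction level of the two cited definitions, claim (i) is provable only against honest counterparties; the unrestricted statement, as used for the paper's concrete protocols, rests on a protocol-specific dominance fact (an abort is scored as an automatic loss of the current match, and with a monotonic tournament structure losing is pointwise never better than winning), which is not derivable from \cref{def:fairness} and \cref{def:liveness} alone. Either restrict part (i) to the lone-deviator setting, where your conservation argument is airtight, or add that dominance property as an explicit hypothesis.
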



\subsection{Two-Party Lottery}
\label{sec:two-party-lottery}

For two participants, such a lottery can be implemented with the following simple commit-and-reveal protocol idea,
due to Delmonilo et al.~\cite{two_player_lotteries}:
Both participants initially publish a commitment to a random bit.
If both publicly open their respective commitment, the XOR of their bits determines the winner.
Otherwise, if one party does not open their commitment, the other party automatically wins the lottery.
This prevents the second party from strategically aborting the protocol to bias the output in their favor.
In the game-theoretically unlikely case that neither party opens,
some deterministic rule (e.g. first to register, lower id) can be used to break the tie.
Pseudocode for this protocol can be seen in \cref{alg:two-party-lottery}.
Also, \cref{fig:timeline} shows how time is split into three phases for this protocol.
Both players' commit messages have to be published before the end of the commit phase for the lottery to start.
Both players' openings to the commitment have to be published before the end of the opening phase for them to be considered during settlement.
After start of the settlement phase, either player may initiate settlement.


\begin{figure}[!t]
    \footnotesize
    \begin{algorithmic}
        \State $\lambda$ \Comment{security parameter}
        \State $B$ \Comment{bulletin board / blockchain}
        \State $P$ \Comment{player IDs}
        \State $\mathsf{StartTime}, \mathsf{TimePerRound}$ \Comment{bounds for phases}
        \State $\mathsf{BuyIn}$ \Comment{buy-in amount}
        \Statex
        \Function{Init}{$ $}
            \State $P \gets \emptyset$
            \State $\mathsf{StartTime} \gets B.\mathsf{Time}()$
        \EndFunction
        \Statex
        \Function{Commit}{$id, c$}
            \State \textbf{assert}\ $\textsc{CurrentPhase}(B) = \mathsf{Commit}$
            \State \textbf{assert}\ $|P| < 2$
            \State $P \gets P \cup \{id\}$
            \State $B.\mathsf{Publish}(id, \textsf{Commit}, c)$
            \State $B.\mathsf{DepositFrom}(id, \textsf{BuyIn})$
        \EndFunction
        \Statex
        \Function{Open}{$id, x, v$}
            \State \textbf{assert}\ $\textsc{CurrentPhase}(B) = \mathsf{Open}$
            \State $c \gets B.\mathsf{Read}(id, \textsf{Commit})$
            \State $c' \gets h(x\ ||\ v)$ \Comment{validate hash commitment}
            \State \textbf{assert}\ $c = c'$
            \State $B.\mathsf{Publish}(id, \textsf{Open}, v)$
        \EndFunction
        \Statex
        \Function{Settle}{$ $}
            \State \textbf{assert}\ $\textsc{CurrentPhase}(B) = \mathsf{Settle}$
            \State $a \gets B.\mathsf{Read}(P[0], \textsf{Open})$
            \State $b \gets B.\mathsf{Read}(P[1], \textsf{Open})$
            \If{$a = \bot$}
                \State $\mathsf{winner} \gets 1$
            \ElsIf{$b = \bot$}
                \State $\mathsf{winner} \gets 0$
            \Else
                \State $\mathsf{winner} \gets a \oplus b \mod 2$
            \EndIf
            \State $B.\mathsf{WithdrawTo}(P[\mathsf{winner}], 2 \cdot \textsf{BuyIn})$
        \EndFunction
        \Statex
        \Function{CurrentPhase}{$B$}
            \If{$B.\mathsf{Time}() \le \mathsf{StartTime} + \mathsf{TimePerRound}$}
                \State \Return $\mathsf{Commit}$
            \ElsIf{$B.\mathsf{Time}() \le \mathsf{StartTime} + 2 \cdot \mathsf{TimePerRound}$}
                \State \Return $\mathsf{Open}$
            \EndIf
            \State \Return $\mathsf{Settle}$
        \EndFunction
    \end{algorithmic}
    \caption{Two-party Lottery based on~\cite{two_player_lotteries}.}
    \label{alg:two-party-lottery}
\end{figure}

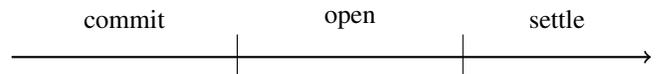
\begin{figure}[!t]
    \centering
    \begin{tikzpicture}

\draw[->, thick] (0, 0) -- (8.5, 0);

\foreach \x/\label in {1.5/commit, 4.5/open, 7.25/settle} {
    \node at (\x, 0.5) {\label};
}

\draw (3, 0.3) -- (3, -0.3);
\draw (6, 0.3) -- (6, -0.3);

\end{tikzpicture}
    \caption{Timeline of the phases of the two-party lottery.}
    \label{fig:timeline}
\end{figure}

Each of the two participants ensures winning at least 50\% of the time, simply by following the protocol.
As a consequence, neither player can be better off by deviating from the protocol.
Note, that this protocol can straightforwardly be adapted to different winning probabilities.

\subsection{Single-Winner Multiparty Lottery}
\label{sec:single-winner-lottery}

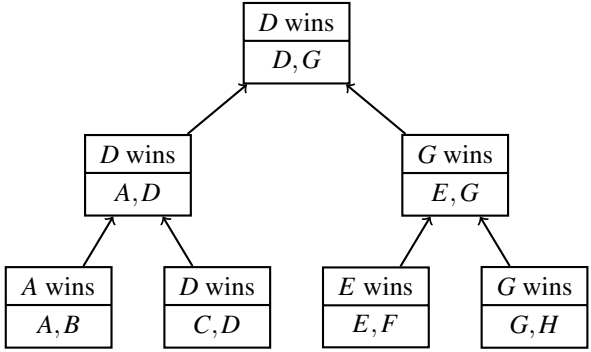
\begin{figure}[!t]
    \centering
    \begin{tikzpicture}[
    level distance=5em,
    every node/.style = {rectangle split, rectangle split parts=2, minimum width=4em, draw, align=center},
    level 1/.style={sibling distance=12em},
    level 2/.style={sibling distance=6em},
    thick,<-]

    \node{$D$ wins\nodepart{two}$D, G$}
    child { node {$D$ wins\nodepart{two}$A, D$}
        child { node {$A$ wins\nodepart{two}$A, B$} }
        child { node {$D$ wins\nodepart{two}$C, D$} }
    }
    child { node {$G$ wins\nodepart{two}$E, G$}
        child { node {$E$ wins\nodepart{two}$E, F$} }
        child { node {$G$ wins\nodepart{two}$G, H$} }
    };
\end{tikzpicture}
    \caption{Example tournament tree with 8 players for the single-winner lottery based on~\cite{zero_collateral_lotteries}.
        Each node in the tree represents one instance of the two-party lottery.
        All winners of any layer advance to the next higher layer.}
    \label{fig:tree-1-winner}
\end{figure}

As previously shown by Miller and Bentov~\cite{zero_collateral_lotteries},
the above two-player protocol (cf. \cref{sec:two-party-lottery}) can be used as a sub-protocol for constructing a multiparty lottery with a single winner.
This can be naturally done when the number of players $n$ is a power of two,
by building a perfect binary tree of two-player instances.
In each round the winners advance to the next level.
The single participant to win in all $\log(n)$ consecutive rounds is the winner of the multiparty lottery.
An example of this can be seen in \cref{fig:tree-1-winner}.

\begin{figure*}[!t]
    \centering
    \begin{tikzpicture}

\draw[->, thick] (0, 0) -- (17.5, 0);

\foreach \x/\label in {1.75/commit, 5.25/open(0), 8.75/settle(0) \& open(1), 12.25/settle(1) \& open(2), 15.75/settle} {
    \node at (\x, 0.5) {\label};
}

\draw (3.5, 0.3) -- (3.5, -0.3);
\draw (7, 0.3) -- (7, -0.3);
\draw (10.5, 0.3) -- (10.5, -0.3);
\draw (14, 0.3) -- (14, -0.3);

\end{tikzpicture}
    \caption{Phases for multiparty lottery.
        Example for tournament tree height of three, i.e., $n=8$ participants, such as the one seen in \cref{fig:tree-1-winner}.
        Some phases serve the function of two phases in the underlying two-party lottery protocols and are denoted as such.}
    \label{fig:timeline-multiparty}
\end{figure*}
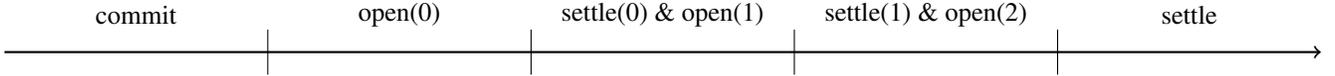

Recall that a two-player lottery requires three sequential rounds (cf. \cref{fig:timeline}).
Still, in practice, the single-winner multiparty lottery consisting of $\log(n)$ such lotteries in sequence does not require $3\log(n)$ rounds.
Instead, it can be implemented with only $\log(n)+3$ sequential rounds of communication with their own respective timeouts.
First of all, values for all rounds can be committed to upfront in a single commit round.
One obvious implementation would be vector commitments~\cite{vector_commitments}, like Merkle trees~\cite{merkle_tree}.
However, due to the sequential nature of the protocol, there is a more efficient solution.
By initially committing with a long enough chain of (hash) commitments, similar to Lamport's one-time passwords~\cite{lamport_otp},
this can even be achieved with (i) one constant sized commit value in the commit round
and (ii) one constant sized opening in each opening round.
This way each opening is implicitly also a commitment for the next round.
Furthermore, almost all settlement and opening rounds can be merged.
For this the winner settles their round $i-1$ two-party lottery when opening their commitment for the one in round $i$.
In a smart contract implementation, this could be allowed to happen as a single transaction.
The resulting timeline can be seen in \cref{fig:timeline-multiparty}.
As an additional optimization for the good case,
any round can be terminated before its timeout as soon as the last player published a valid message for that round.

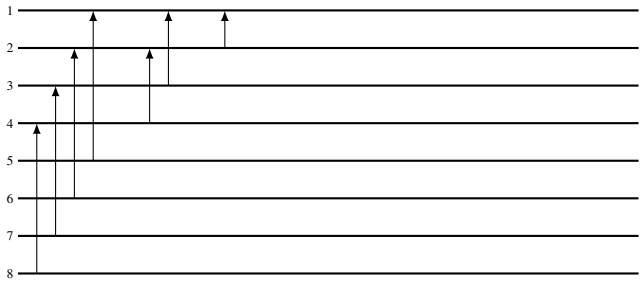
\begin{figure}[!t]
    \centering
    \begin{tikzpicture}[scale=0.5, transform shape]
    \tikzstyle{wire}=[thick]
    \tikzstyle{comparator}=[draw, -latex]

    \foreach \i in {1,...,8} {
        \draw[wire] (0,\i) -- (16.5,\i);
    }
  
    \foreach \i in {1,...,8} {
        \node[left] at (0,9-\i) {\i};
    }
    
    \foreach \i in {1,2,3,4} {
        \draw[comparator] (0.5*\i,\i) -- (0.5*\i,\i+4);
    }
    
    \foreach \i in {5} {
        \draw[comparator] (3.5,\i) -- (3.5,\i+2);
    }
    \foreach \i in {6} {
        \draw[comparator] (4,\i) -- (4,\i+2);
    }

    \foreach \i in {7} {
        \draw[comparator] (5.5,\i) -- (5.5,\i+1);
    }
\end{tikzpicture}
    \caption{Same tournament as in \cref{fig:tree-1-winner},
        depicted in sorting network~\cite{sorting_networks} notation as introduced by Donald Knuth~\cite{knuth_notation}.
        Each arrow indicates a single two-player coin flip with 50\% chance of either player winning,
        with the arrow head indicating the new position of the winner.}
    \label{fig:sorter-1-winner}
\end{figure}

\section{Main \sysname Protocol}
\label{sec:main-protocol}

\begin{table*}[!t]
    \centering
    \begin{tabular}[t]{lccccccc}
        \toprule
        \textbf{Protocol} & \textbf{Section} & \textbf{Weighted} & \textbf{Multi-Winner} & \textbf{Unlinkability} & \textbf{Rounds} & \textbf{Messages} & \textbf{Computation} \\
        \midrule
        Miller and Bentov \cite{zero_collateral_lotteries} & \ref{sec:single-winner-lottery} & --- & --- & ---    & $\log n$   & $n$          & $n$          \\
        single-winner \cite{purelottery}     & \ref{sec:weighted-winning-probabilities} & yes & --- & ---    & $\log n$   & $n$          & $n$          \\
        sequential~\cite{purelottery_thesis} & \ref{sec:naive-generalizations}          & yes & yes & ---    & $n \log n$    & $n^2$        & $n^2$        \\
        parallel~\cite{purelottery_thesis}   & \ref{sec:naive-generalizations}          & yes & yes & ---    & $\log n$   & $n^2 \log n$ & $n^2 \log n$ \\
        \midrule
        parallel + PRNG                      & \ref{sec:naive-generalizations}          & yes & yes & ---    & $\log n$   & $n \log n$   & $n^2$        \\
        \sysname                             & \ref{sec:arbitrary-payout-functions}     & --- & yes & ---    & $\log n$   & $n \log n$   & $n \log n$   \\
        perfect shuffling                    & \ref{sec:perfect-shuffling}              & yes & yes & ---    & $n$            & $n^2$        & $n^2$        \\
        \sysnamezkp                          & \ref{sec:unlinkable-lotteries}           & --- & yes & strong & $\log n$   & $n \log n$   & $n \log n$   \\
        \sysnamewul                          & \ref{sec:unlinkable-lotteries}           & --- & yes & weak   & $\log n$   & $n \log n$   & $n \log n$   \\
        \bottomrule
    \end{tabular}
    \caption{Comparison of different multiparty lottery protocols regarding the additional properties achieved
        and the asymptotic complexity in terms of sequential rounds, total message sizes, and computation time.
        Specifically comparing previous work and the constructions seen in \cref{sec:naive-generalizations} to the \sysname family.}
    \label{tab:asymptotic-complexity}
\end{table*}

The previous protocol does not natively support arbitrary (non-power of two) $n$,
arbitrary winning probabilities and arbitrary payout functions.
In this section we describe how to construct more generally applicable multiparty lottery protocols.
The additional goals of this lottery are captured by the following definitions.

\begin{definition}[Payout Function]
    A payout function is a monotonic non-increasing function
    $p : [n] \rightarrow [0; 1]_\mathbb{R}$ with $\sum_{i=1}^n p(i) = 1$,
    which determines which fraction of the entire pot each participant wins based on their position in the output permutation.
\end{definition}

The single-winner case is then just the special case where, for any $n$,
the payout function is $p_n(1) = 1$ and $p_n(i) = 0$ for any $i \in [n] \setminus \{1\}$.
That is, it can be considered as a specific family of payout functions.
All other payout functions are called the multi-winner case.

Solving the multi-winner case in general is possible by generating all permutations with certain probabilities.
However, for the $k$-winner case, as an optimization it might be more efficient to only select $k$ participants.
For example, one could run $k$ instances of the single-winner protocol to achieve that.

We define winning probabilities in fully-honest executions of the multi-winner case as being equivalent to
iterative rounds of single winner lotteries, with $n, n-1, \ldots, 2$ participants.
First place is defined exactly as the winner in the single-winner case.
Other positions are defined recursively by removing the previous winner and applying again.
With equal winning probabilities for all players in the single-winner case,
this is naturally equivalent to all permutations being equally likely outcomes.

\begin{definition}[Weights]
    During registration each participant $i$ is assigned a weight $w_i \in \mathbb{R}^{+}$.
\end{definition}

Weights may be assigned by the protocol according to some arbitrary application-specific rule.

With weights, the winning probability of any participant in the single-winner case then canonically generalizes.
Let $n$ be the total number of participants in the lottery with weights $w_1, \ldots, w_n$.
In a fully-honest execution of the lottery,
participant $i$ has a $\frac{w_i}{w_1 + \ldots + w_n}$ chance of winning the lottery.

Since the above description of multi-winner winning probabilities is independent
of the specific rule for single-winner winning probabilities it is based on,
it can also be used in conjunction with weights.
Fairness (\cref{def:fairness}) applies straightforwardly to any of the above cases of winning probabilities.

Concretely, in the following, we propose various protocols.
An overview of all newly proposed protocols compared to previous work can be found in \cref{tab:asymptotic-complexity}.

\subsection{Weighted Winning Probabilities}
\label{sec:weighted-winning-probabilities}

Next we show a generalization of the single-winner lottery from \cref{sec:single-winner-lottery} to weighted winning probabilities.
For example, weights could simply be based on the buy-in price of the lottery ticket.
Furthermore, this change to the protocol inherently makes it support numbers of participants that are not a perfect power-of-two.
This is naturally accommodated for when all participants are initially assigned weight one,
and empty spots (filling up to a power-of-two) are assigned weight zero.

\begin{figure}[!t]
    \centering
    \begin{tikzpicture}[
    level distance=5em,
    every node/.style = {rectangle split, rectangle split parts=2, text height=1em, minimum width=4em, draw, align=center},
    level 1/.style={sibling distance=12em},
    level 2/.style={sibling distance=6em},
    thick,<-]

    \node{$\frac{5}{13}, \frac{8}{13}$\nodepart{two}$5, 8$}
    child { node {$\frac{2}{5}, \frac{3}{5}$\nodepart{two}$2, 3$}
        child { node {$\frac{1}{2}, \frac{1}{2}$\nodepart{two}$1, 1$} }
        child { node {$\frac{2}{3}, \frac{1}{3}$\nodepart{two}$2, 1$} }
    }
    child { node {$\frac{7}{8}, \frac{1}{8}$\nodepart{two}$7, 1$}
        child { node {$\frac{4}{7}, \frac{3}{7}$\nodepart{two}$4, 3$} }
        child { node {$1, 0$\nodepart{two}$1, \bot$} }
    };
\end{tikzpicture}
    \caption{Example of a weighted single-winner lottery with 7 players,
        as also realized in~\cite{purelottery}.
        Each node in the tree represents one instance of the two-party lottery.
        Bottom numbers represent the weights of left and right player respectively,
        top numbers represent their winning probabilities.}
    \label{fig:tree-1-winner-weights}
\end{figure}
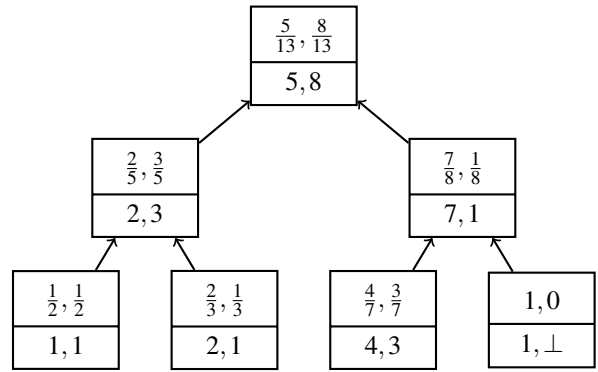

Augmenting the single-winner lottery with weights is rather straightforward and has been previously considered by Ballweg et al.~\cite{purelottery}.
Each node of the tournament tree is assigned a weight.
Initially the leaves are assigned the participants weights, assigned to them when entering the lottery.
For any two-party lottery where participants with weights $w_1$ and $w_2$ are competing against each other,
their respective winning probabilities are $\frac{w_1}{w_1 + w_2}$ and $\frac{w_2}{w_1 + w_2}$.
A winner of any two-party lottery is assigned the sum of the two participants' weights in the next round.
A practical example of this entire tournament structure can be seen in \cref{fig:tree-1-winner-weights}.

Concretely, determining the winner with the correct probabilities may be implemented as follows.
Call the values revealed by the two participants $v_1$ and $v_2$.
We interpret $x \coloneqq v_1 \oplus v_2$ as a $\lambda$-bit fixed-precision rational number in the interval $[0; 1)$.
If $x < \frac{w_1}{w_1 + w_2}$, participant 1 wins, otherwise participant 2 wins.

\subsection{Naive Generalizations}
\label{sec:naive-generalizations}

The tree-based single-winner protocols of \cref{sec:single-winner-lottery,sec:weighted-winning-probabilities} can easily,
albeit inefficiently, be extended to a full ranking of the $n$ participants.
This can be done by running multiple instances of the protocol in sequence.
First it is run by all $n$ participants for determining first place,
then the same protocol is run by the remaining $n-1$ participants to determine second place, and so on.

However, this blows up the communication complexity as well as the number of rounds of the protocol each by a factor of $\Theta(n)$,
bringing them to $\Theta(n^2)$ and $\Theta(n \log n)$ respectively.
The number of rounds (but not communication complexity) can be reduced by using rejection sampling instead.
That is, instead of running $n$ instances of the single-winner lottery in sequence,
one can run $\Theta(n \log n)$ instances of it in parallel~\cite{coupon_collector_problem}.
Instances selecting a previously selected participant are simply disregarded for the final ranking.
In fact, this approach further increases communication complexity, bringing it up to $\Theta(n^2 \log n)$.

Another option is simulating sequential execution in retrospect.
This way, the required number of instances can be reduced from $\Theta(n \log n)$ to the optimal of at most $n$.
Later instances (in the order of outcome positions they determine) can be conditioned on the set of participants remaining.
This can only be done in hindsight (after evaluating previous instances).
However, in contrast to naive sequential instances, this does not require additional rounds of communication.

If we assume existence of a pseudo-random number generator,
the instances do not even each need their own values from different commitments to be revealed.
Instead, the participants can commit to and reveal a sequence of high-entropy seed values during the one live instance.
From each seed value up to $n$ random values are then derived during the instances simulated in retrospect.

How a value is derived for instance $i$ from the seed should never depend on the outcome of any other instances.
This is important in order to prevent any new biasing attack vectors from opening up.
For example, the opening should be derived as $h(s \| i)$.
Further, it is important that openings for the simulated rounds
are derived based on the opening from the corresponding round in the live instance.
Otherwise, abort resistance only holds for the lottery for first place.
In practice, this requires either storing all openings for all participants until the entire lottery is settled
or (if using hash-chain commitments) recomputing the openings for simulated rounds from the last opening.

\subsection{Arbitrary Payout Functions}
\label{sec:arbitrary-payout-functions}

The trivial way of turning a single-winner lottery into a multi-winner lottery is by running $n$ instances of the single-winner lottery sequentially.
This results in a total order of all participants, which then serves as input to the payout function.
Using the single-winner lottery tournament tree for each instance would thus result in a total runtime of $\Theta(n \log n)$ and a total of $\Theta(n^2)$ messages.
Our construction of the standard \sysname protocol, presented in this section,
is more efficient but assumes equal winning probability for all participants.
So, we first argue that this assumption is reasonable.

\begin{theorem}[Split Dominance]
\label{thm:split-dominance}
    Let there be a multi-winner lottery with $n$ participants,
    payout function $p_n : [n] \rightarrow \mathbb{R}$,
    and minimum buy-in $w_\mathsf{min}$.
    Participants are able to enter the lottery with buy-in $w = k w_\mathsf{min}$ for any $k \in \mathbb{N}$.
    Assume that transaction fees for participating in the lottery are zero.
    Entering the lottery with $k$ individual entries of $w_\mathsf{min}$ each, is a weakly dominating strategy.
    For any payout function $p_n$, that is not single-winner, it is even strictly dominating.
\end{theorem}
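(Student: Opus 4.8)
The plan is to show that splitting a buy-in of $w = k w_\mathsf{min}$ into $k$ separate entries of $w_\mathsf{min}$ weakly dominates the single large entry, and strictly dominates it whenever the payout function is not single-winner. I would fix an arbitrary strategy profile of the other participants and compare the expected payout to our player under the two options, holding everything else constant. The natural comparison is between (a) a participant with weight $kw_\mathsf{min}$ occupying one leaf, versus (b) $k$ participants each of weight $w_\mathsf{min}$. Since \Cref{thm:split-dominance} concerns a weakly/strictly dominating strategy, I need the inequality to hold against \emph{every} behavior of the others, so I would rely on the fairness guarantee (\Cref{def:fairness}) to lower-bound the split player's payout by its fully-honest expectation, and then reduce the whole problem to comparing fully-honest expected payouts of the two configurations.

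**The core combinatorial comparison.** The heart of the argument is a fully-honest computation. Recall that the multi-winner winning probabilities are defined as iterated single-winner lotteries with $n, n-1, \ldots$ participants, and that a weight-$w$ participant wins a given single-winner round with probability proportional to $w$ over the total remaining weight. I would argue that merging the $k$ unit entries into one weight-$kw_\mathsf{min}$ entry is equivalent to a particular \emph{coarsening} of the permutation outcome: the combined entry's payout equals the payout the top-ranked of the $k$ split entries would receive, because a single entry can only occupy one position. In the split case, the player collects $p_n$ evaluated at \emph{each} of the $k$ positions its entries land in, which includes that top position plus $k-1$ additional (weakly positive) terms $p_n(\cdot)$ for the lower-ranked entries. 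The key algebraic step is showing that the probability the combined entry lands in any prefix of positions exactly matches the probability that the \emph{best} of the $k$ split entries lands there — this is where the weight-additivity of the tournament structure does the work, since weight $kw_\mathsf{min}$ behaves identically to $k$ merged unit weights in the single-winner subroutine.

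**From weak to strict.** Given the decomposition above, the split strategy's payout is the merged strategy's payout plus the expected contribution $\sum$ of $p_n$ at the $k-1$ lower positions, which is a sum of nonnegative terms since $p_n$ maps into $[0;1]_\mathbb{R}$. This immediately yields weak dominance. For strict dominance I would argue that if $p_n$ is not single-winner, then $p_n(j) > 0$ for some $j \ge 2$, and with positive probability at least two of the $k$ split entries rank within the first $j$ positions (this requires $k \ge 2$, which is implicit since otherwise splitting is vacuous); hence at least one strictly positive payout term is collected that the merged entry cannot capture, making the inequality strict. I would need to verify that this positive-probability event indeed occurs under fully-honest play, which follows because every participant retains strictly positive winning probability in each single-winner subroutine.

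**Anticipated main obstacle.** The hard part will be the passage from arbitrary (possibly adversarial) opponent behavior to the fully-honest comparison. Fairness guarantees each \emph{honest} entry at least its fully-honest expected payout, but the subtlety is that the $k$ split entries are controlled by one player and their payouts are correlated, so I must ensure that summing the per-entry fairness bounds is legitimate and that no adversarial profile can make the \emph{merged} entry do strictly better than its own fully-honest value in a way that erases the split advantage. I expect to resolve this by noting that fairness bounds each entry's payout \emph{from below} by its honest expectation regardless of others, so the total split payout is lower-bounded by the honest split expectation; and symmetrically the merged payout is upper-bounded by its honest expectation under the best-responding adversary. Reconciling these two bounds — a lower bound on one quantity and an upper bound on the other — cleanly requires care, and this reconciliation is where I would concentrate most of the rigor.
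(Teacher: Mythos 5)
Your core combinatorial step is correct, and it is a genuinely different route from the paper's. The paper argues through the sequential single-winner equivalence round by round: the per-round winning probability $\frac{w}{w_1+\ldots+w_n}$ is invariant under splitting, but after a win the merged configuration removes more weight ($w_1 > w_\mathsf{min}$) from later rounds, depressing the expected payout at position $2$ and beyond. Your coupling --- the merged entry's position is equidistributed with the position of the \emph{top-ranked} of the $k$ unit entries, so the split payout equals the merged payout plus $k-1$ additional nonnegative terms $p_n(\cdot)$ --- is arguably tighter and more self-contained, and your strictness argument (some $p_n(j) > 0$ with $j \ge 2$, hence $p_n(2) > 0$ by monotonicity of the payout function, plus positive probability that two unit entries land in the top two positions) is cleaner than the paper's inequality chain. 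One scope difference: the paper's local argument (``no entry larger than $w_\mathsf{min}$ that could be split'') covers \emph{all} intermediate partitions of $w$, whereas your coupling only compares the two extreme configurations; to dominate a partition such as $\{2w_\mathsf{min}, w_\mathsf{min}, \ldots\}$ you would need to iterate your coupling one entry at a time, which does work but should be stated.

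The genuine gap is exactly where you anticipated it, and your proposed resolution fails. You want to lower-bound the split payout by its honest expectation via \cref{def:fairness} and upper-bound the merged payout by its honest expectation, then chain the bounds. The second bound is false: in these protocols misbehavior is treated as resignation in favor of the remaining players, so aborting opponents strictly \emph{increase} the merged entry's payout above its fully-honest expectation, and the two bounds point the wrong way relative to each other. Worse, if opponents may condition their strategy on your configuration (abort whenever you merge, play honestly whenever you split), dominance against arbitrary opponent strategy profiles is simply false, so no reconciliation of the kind you describe can exist. The paper sidesteps this entirely: its proof is a comparison of expected payouts with the others' behavior held fixed (effectively the fully-honest execution), and that is the notion of dominance the theorem should be read as asserting. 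The fix for your writeup is not adversarial bounding but observing that your coupling never used honesty --- it only used the proportional per-round win rule with identical removals of non-player entries in both scenarios --- so it can be applied profile-by-profile with each opponent's (possibly aborting) behavior fixed independently of your split decision, treating an abort as an automatic loss in the round where it occurs.
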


\begin{proof}
    For clarity, consider the equivalence to the naive sequential single-winner lotteries.
    In each single-winner lottery with participating weights $w_1, \ldots, w_n$,
    the chance for a user controlling total stake $w$ to win,
    is $\frac{w}{w_1 + \ldots + w_n}$.
    It does not depend on how the weight is split between one or multiple participants.
    Thus, the chance for winning a single-winner lottery round does not change when splitting or merging buy-ins.
    However, the winning entry will be removed from the following round.
    So removing the least amount of weight per round maximizes future chances of winning.

    Specifically, assume the user has an entry with weight $w_1 > w_\textsf{min}$.
    If they win the first single-winner lottery with this entry,
    their winning probability in the second one is
    $\frac{w-w_1}{w_1 + \ldots + w_n} < \frac{w-w_\textsf{min}}{w_1 + \ldots + w_n}$,
    Given $p(2) > 0$, their expected payout for that round is then
    $\frac{w-w_1}{w_1 + \ldots + w_n} p(2) < \frac{w-w_\textsf{min}}{w_1 + \ldots + w_n} p(2)$.
    So, given $p(1), p(2) > 0$,
    having no entry larger than $w_\textsf{min}$ that could be split is a dominating strategy.
    By the same argument, there is no disadvantage even if $p(2) = 0$.
\end{proof}

\begin{corollary}
    In a single-winner lottery, assuming transaction fees are strictly positive.
    A user always has positive expected value for merging their entries.
    The difference in expected value is exactly equal to the saved transaction fees.
\end{corollary}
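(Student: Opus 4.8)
The plan is to isolate the two contributions to a user's expected value --- the expected payout from the lottery and the transaction fees paid --- and to show that merging affects only the latter. First I would invoke the observation established in the proof of \cref{thm:split-dominance}: in any single-winner lottery with participating weights $w_1, \ldots, w_n$, a user controlling total stake $w$ wins with probability $\frac{w}{w_1 + \ldots + w_n}$, and this quantity is a function of $w$ alone, independent of how $w$ is partitioned across entries. Consequently, both the winning probability and the resulting expected payout from the pot are identical whether the user merges their stake into a single entry or splits it into several.

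Crucially, the single-winner setting removes the one effect that made splitting advantageous in \cref{thm:split-dominance}: there is no subsequent round from which a winning entry is removed, so the strict inequality driven by $p(2) > 0$ never arises. Hence splitting confers no benefit whatsoever on the payout side. I would make this precise by writing the expected value as expected payout minus total cost, and noting that the buy-in component of the cost is fixed at $w$ regardless of the split, so it too cancels when comparing the two strategies.

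It then remains to account for transaction fees. Each individual entry must be submitted as its own transaction and therefore incurs its own strictly positive fee, whereas merging $k$ entries into one requires only a single transaction. I would argue that merging eliminates exactly the fees of the $k-1$ transactions it saves, and since both the expected-payout term and the buy-in term are unchanged, the difference in expected value between the merged and split strategies equals precisely this sum of saved fees. As each saved fee is strictly positive, the difference is strictly positive, establishing both claims of the corollary.

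The main obstacle I anticipate is pinning down the fee model cleanly: the statement implicitly charges one fee per entry (transaction), and I would need to state that assumption explicitly and confirm that merging does not introduce any additional fee that would offset the savings. Beyond that bookkeeping, the argument is essentially a restatement of the fee-independence of the winning probability already available from \cref{thm:split-dominance}, combined with the absence of any later round in the single-winner case.
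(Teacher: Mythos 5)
Your argument is correct and matches the paper's intended reasoning: the paper states this corollary without a separate proof, as an immediate consequence of the Split Dominance theorem, whose proof already establishes that the single-winner winning probability $\frac{w}{w_1 + \ldots + w_n}$ depends only on the total stake $w$ and not on its partition, so merging changes nothing on the payout side and the expected-value difference reduces exactly to the saved per-entry transaction fees. Your explicit note that the fee model charges one fee per entry is a reasonable piece of bookkeeping that the paper leaves implicit.
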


\begin{corollary}
    In practice, $w_\textsf{min} \cdot p(2)$ has to be larger than the transactions fees for a participant to have a strictly positive change in expected payout for splitting their entries.
\end{corollary}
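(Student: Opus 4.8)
The plan is to treat this corollary as a quantitative refinement of \cref{thm:split-dominance}: that theorem establishes that splitting weakly (and for non-single-winner $p_n$ strictly) increases expected payout when fees are zero, so here I would re-examine exactly that gain and weigh it against the fee incurred by each additional entry. Concretely, I would write the net change in expected payout from peeling one $w_\mathsf{min}$ chunk off a larger entry as the expected-payout gain identified in the theorem minus the one extra transaction fee that the additional entry costs, and then ask when this difference is strictly positive.

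The core step is to quantify the leading term of that gain. As shown in the proof of \cref{thm:split-dominance}, splitting leaves the first-place winning probability unchanged; the entire advantage surfaces in the second-place (and lower) positions, because a participant who wins first place with a smaller chunk removes less of their own weight from the pool and thus retains more weight for the contest for second place. I would estimate this marginal gain as $\Pr[\text{win first}] \cdot \Delta_2 \cdot p(2) \cdot (\text{pot})$, where $\Delta_2 \approx w_\mathsf{min} / W$ is the increase in the conditional second-place winning probability obtained from retaining one extra $w_\mathsf{min}$ of weight, $W = w_1 + \ldots + w_n$, and the pot equals $W$ when buy-in coincides with weight. The normalizing $W$ then cancels, leaving a gain of $\Pr[\text{win first}] \cdot w_\mathsf{min} \cdot p(2) \le w_\mathsf{min} \cdot p(2)$.

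Combining these, a strictly positive net change requires the gain to exceed the extra fee, so in particular $w_\mathsf{min} \cdot p(2)$ — the best case of the gain, taking $\Pr[\text{win first}] \to 1$ — must itself exceed the transaction fee; otherwise no split can ever pay off. This is exactly the stated necessary condition, and the argument also explains why $p(2)$ rather than $p(1)$ appears: $p(1)$ governs only the unchanged first-place probability and hence contributes nothing to the differential.

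The main obstacle I anticipate is turning this into an exact rather than leading-order bound. The gain also carries strictly positive contributions from third place onward, so the total differential is not literally capped by the $p(2)$ term, and the conditioning on which chunk wins first, together with the approximation $\Delta_2 \approx w_\mathsf{min}/W$ that ignores the shrinking of the denominator after the first winner is removed, prevents the accounting from being a clean identity. This is presumably why the statement is hedged as holding \emph{in practice}: for steeply decreasing payout functions the $p(2)$ term dominates the differential, so I would argue that the lower-order terms are negligible in the regime of interest rather than attempt an exact closed form.
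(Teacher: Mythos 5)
Your proposal is correct and follows essentially the same route the paper intends: the corollary is stated without an explicit proof as a quantitative reading of the Split Dominance theorem, whose proof already locates the entire differential in the retained-weight term $\frac{w_1 - w_\textsf{min}}{w_1 + \ldots + w_n}\,p(2)$, and your accounting (gain $\approx \Pr[\text{win first}] \cdot w_\textsf{min} \cdot p(2)$ after the pot normalization cancels, weighed against one extra entry's fee) is exactly that argument made explicit. Your closing observation that third-place-onward terms and the shrinking denominator prevent an exact cap, which is why the statement is hedged with ``in practice,'' correctly identifies why the paper leaves this as an informal corollary rather than a tight bound.
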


Based on these results one can argue that the two most interesting cases of multiparty lotteries are
(i) single-winner lotteries with weighted winning probabilities
and (ii) multi-winner lotteries with equal winning probabilities for all participants.
For the first case, the construction seen in \cref{sec:weighted-winning-probabilities} is almost ideal.
So, next we focus on the second case.

First, we define shuffling networks, monotonic shuffling networks.
The, we show how monotonic shuffling networks can generally yield
fair constructions when instantiated with the two-party lottery protocol.
Finally, we give a concrete protocol, called \sysname.

Shuffling networks are defined analogous to sorting networks~\cite{sorting_networks}.
We also use the same visual representation often used for sorting networks,
which was originally introduced by Donald Knuth~\cite{knuth_notation}.
However, we always depict shufflers as directed arrows.
This is relevant because of the asymmetry of the two-party lottery protocol used to instantiate them.

\begin{definition}[Shuffling Network]
    \label{def:shuffling-network}
    A shuffling network is defined by a number of wires $n$ and a sequence of swappers $(i, j, p)$,
    which swap wires $i$ and $j$ with probability $p$.
    We call a shuffling network \emph{correct} if any of the $n$ inputs has probability $1/n$ of appearing on any specific output wire.
    We call a shuffling network \emph{perfect} if any permutation of the inputs is equally likely.
\end{definition}

A shuffling network being perfect trivially implies it being correct.
Shuffling networks can additionally be interpreted as directed.
For example, say canonically that a swapper $(i, j, p)$ is directed from $i$ to $j$.

\begin{definition}[Monotonic Shuffling Network]
    \label{def:monotonic-shuffling-network}
    We say a shuffling network is \emph{monotonic} if the following holds:
    A shuffling network is monotonic if, for every swapper $(i,j,p)$ in the network,
    the change in expected payout from swapping a wire from position $i$ to position $j$ is non-negative,
    given a payout function is applied to the output positions in canonical order.
\end{definition}

This definition is essential for fairness,
when instantiating swappers with the two-party lottery protocol,
considering the one-sided biasability of that protocol.

\begin{theorem}[Fair Shuffling]
    Instantiate each swapper of a correct monotonic shuffling network with the two-party lottery from \cref{sec:two-party-lottery}.
    Ensure that the winner of the two-party lottery moves towards the higher expected payout.
    The resulting protocol achieves fairness according to \cref{def:fairness} for any monotonic non-increasing payout function.
\end{theorem}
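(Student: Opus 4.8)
The plan is to show that an honest participant $i$'s expected payout, under arbitrary behavior of everyone else, is at least the value $p_i$ it obtains in the fully honest execution, which is precisely what \cref{def:fairness} demands. Since the guarantee for $i$ must survive arbitrary collusion, I would treat all players other than $i$ as a single worst-case adversary controlling the remaining $n-1$ tokens. The only leverage $i$ has is the one-sided biasability of the two-party lottery from \cref{sec:two-party-lottery}: $i$ commits to a uniform value under a hiding commitment, and its opponent must either open a value fixed before seeing $i$'s (so the XOR is uniform and $i$ wins with exactly its honest probability) or abort (conceding the win to $i$). Hence, orienting each swapper so that the winner moves to the higher-payout wire $b$, an honest $i$ on the low wire $a$ reaches $b$ with some probability $q \ge p$, and an honest $i$ already on $b$ stays with probability $q' \ge 1-p$, where $p$ is the honest swap probability; the adversary can only push these probabilities \emph{upward}, never below the honest values.

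The key simplification I would make is that only the wire currently holding $i$'s token matters. The other tokens are all adversarial and interchangeable, so their arrangement never constrains the adversary and can be projected away, reducing the analysis to a single-token walk. For each prefix length $t$ of the swapper sequence and each wire $w$, I would define the \emph{honest continuation value} $V_t(w)$ (the expected final payout of a token on $w$ after swapper $t$ when all remaining swappers run honestly) and the \emph{worst-case value} $U_t(w)$ (the infimum of that expectation over all adversarial opponents at the remaining swappers that involve $i$). By construction $V_0(W_0)=p_i$ and every adversary gives $i$ at least $U_0(W_0)$, so it suffices to prove $U_t \ge V_t$ pointwise by backward induction on $t$. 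Here I would read the hypothesis that the network is monotonic (\cref{def:monotonic-shuffling-network}) as exactly the inequality $V_{t+1}(b) \ge V_{t+1}(a)$ at every swapper $(a,b,p)$: moving a token onto the winner's wire never lowers its honest continuation value. The base case $U_T=V_T=\text{payout}$ is immediate, and correctness of the network (\cref{def:shuffling-network}) is not needed for the dominance itself, only to pin $p_i$ to the intended uniform value $1/n$.

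The inductive step is a one-variable linear minimization: when $i$ is uninvolved the value is unchanged, when $i$ is on $a$ the adversary minimizes $q\,U_{t+1}(b)+(1-q)\,U_{t+1}(a)$ over $q\in[p,1]$, and when $i$ is on $b$ it minimizes $q'\,U_{t+1}(b)+(1-q')\,U_{t+1}(a)$ over $q'\in[1-p,1]$. The tempting shortcut is to assume $U$ is monotone in wire payout, so the minimizer is always the honest $q=p$ (resp. $q'=1-p$), making the recursion for $U$ coincide with that for $V$ and yielding $U_t=V_t$ outright. I expect this to be the main obstacle, because monotonicity in wire payout genuinely fails: one computes $V_t(b)-V_t(a)=(1-2p)\bigl(V_{t+1}(b)-V_{t+1}(a)\bigr)$, which is negative whenever $p>1/2$, so entering a swapper on the higher-payout wire can be \emph{worse} than on the lower one, and the adversary may therefore prefer the extreme $q=1$ of always conceding to $i$. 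The resolution I would spell out as the heart of the proof is that even at $q=1$ the token lands on $b$, and monotonicity forces $V_{t+1}(b)\ge V_t(a)$ (since $V_t(a)$ is a convex combination of $V_{t+1}(a)\le V_{t+1}(b)$ and $V_{t+1}(b)$); combined with the inductive hypothesis $U_{t+1}\ge V_{t+1}$, both the $q=p$ branch and the $q=1$ branch give $U_t(a)\ge V_t(a)$, and the symmetric argument gives $U_t(b)\ge V_t(b)$. Chaining this inequality down to $t=0$ yields $U_0(W_0)\ge V_0(W_0)=p_i$, which establishes fairness.
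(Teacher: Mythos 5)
Your proof is correct and ends where the paper ends, but it is a genuinely more developed argument than the one the paper gives. The paper's proof is two sentences: it observes that, by \cref{def:monotonic-shuffling-network}, winning any swapper may increase but never decrease expected payout, and that expected payout is all a decision can be based on --- essentially asserting the claim rather than deriving it. Your backward induction with the honest and worst-case value functions $V_t$ and $U_t$, together with the reduction of the adversary's per-swapper influence to an effective probability $q \in [p,1]$ (resp.\ $q' \in [1-p,1]$, justified by binding/hiding commitments and abort-means-loss), is a formalization the paper does not carry out. Most valuably, you identify where the naive version of the argument breaks: the identity $V_t(b)-V_t(a) = (1-2p)\bigl(V_{t+1}(b)-V_{t+1}(a)\bigr)$ shows continuation values are \emph{not} monotone in wire order when $p > \frac{1}{2}$ --- and such swappers occur in the paper's own merging rounds (probabilities $\frac{2}{3}$ and $\frac{4}{5}$ in \cref{fig:full-example}) --- so the adversary's minimizing deviation may be $q=1$ (always conceding) rather than honest play; your convexity observation that $V_{t+1}(b) \ge \max\{V_t(a), V_t(b)\}$, which follows from $V_t(a)$ and $V_t(b)$ being convex combinations of $V_{t+1}(a) \le V_{t+1}(b)$, closes exactly this case. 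Your side remark is also accurate: correctness of the network is not needed for the dominance $U_0 \ge V_0$, only to pin the honest value $p_i$ to $1/n$, which matches how \cref{def:fairness} is phrased relative to the honest execution. In short, you prove rigorously what the paper only sketches, and you surface and resolve a subtlety that its two-line proof silently skates over.
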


\begin{proof}
    From the definition of monotonic shuffling networks (\cref{def:monotonic-shuffling-network}),
    it directly follows that winning the two-party lottery for any specific swapper may increase but never decrease the expected payout.
    Because the exact payout depends on the values revealed by future opponents,
    expected payout is all the decision can be based on.
\end{proof}

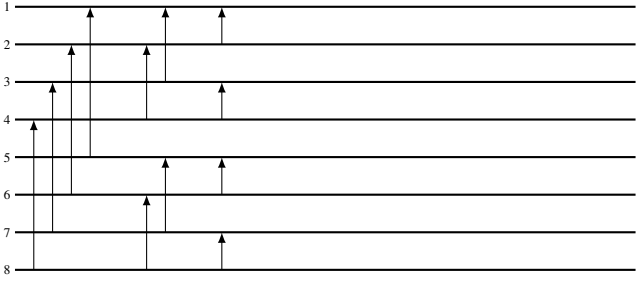
\begin{figure}[!t]
    \centering
    \begin{tikzpicture}[scale=0.5, transform shape]
    \tikzstyle{wire}=[thick]
    \tikzstyle{comparator}=[draw, -latex]

    \foreach \i in {1,...,8} {
        \draw[wire] (0,\i) -- (16.5,\i);
    }
  
    \foreach \i in {1,...,8} {
        \node[left] at (0,9-\i) {\i};
    }
    
    \foreach \i in {1,2,3,4} {
        \draw[comparator] (0.5*\i,\i) -- (0.5*\i,\i+4);
    }
    
    \foreach \i in {1,5} {
        \draw[comparator] (3.5,\i) -- (3.5,\i+2);
    }
    \foreach \i in {2,6} {
        \draw[comparator] (4,\i) -- (4,\i+2);
    }

    \foreach \i in {1,3,5,7} {
        \draw[comparator] (5.5,\i) -- (5.5,\i+1);
    }
\end{tikzpicture}
    \caption{Visual representation of a correct shuffling network,
        that corresponds to a multi-winner lottery with 8 participants.
        This construction generalizes to any power of two.
        Each arrow indicates a single two-party coin flip.}
    \label{fig:sorter-8}
\end{figure}

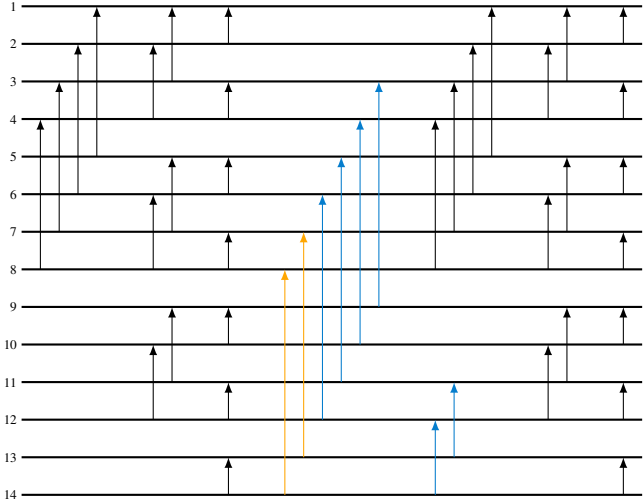
\begin{figure}[!t]
    \centering
    \begin{tikzpicture}[scale=0.5, transform shape]
    \tikzstyle{wire}=[thick]
    \tikzstyle{comparator}=[draw, -latex]

    \foreach \i in {1,...,14} {
        \draw[wire] (0,\i) -- (16.5,\i);
    }
  
    \foreach \i in {1,...,14} {
        \node[left] at (0,15-\i) {\i};
    }

    \foreach \i in {7,8,9,10} {
        \draw[comparator] (0.5*\i-3,\i) -- (0.5*\i-3,\i+4);
    }
    
    \foreach \i in {3,7,11} {
        \draw[comparator] (3.5,\i) -- (3.5,\i+2);
    }
    \foreach \i in {4,8,12} {
        \draw[comparator] (4,\i) -- (4,\i+2);
    }

    \foreach \i in {1,3,5,7,9,11,13} {
        \draw[comparator] (5.5,\i) -- (5.5,\i+1);
    }

    \foreach \i in {1,2} {
        \draw[comparator, darkyellow] (6.5+0.5*\i,\i) -- (6.5+0.5*\i,\i+6);
    }
    \foreach \i in {3,...,6} {
        \draw[comparator, darkblue] (6.5+0.5*\i,\i) -- (6.5+0.5*\i,\i+6);
    }
    
    \foreach \i in {1,2} {
        \draw[comparator, darkblue] (10.5+0.5*\i,\i) -- (10.5+0.5*\i,\i+2);
    }

    \foreach \i in {7,8,9,10} {
        \draw[comparator] (10.5+0.5*\i-3,\i) -- (10.5+0.5*\i-3,\i+4);
    }
    
    \foreach \i in {3,7,11} {
        \draw[comparator] (14,\i) -- (14,\i+2);
    }
    \foreach \i in {4,8,12} {
        \draw[comparator] (14.5,\i) -- (14.5,\i+2);
    }

    \foreach \i in {1,3,5,7,9,11,13} {
        \draw[comparator] (16,\i) -- (16,\i+1);
    }
\end{tikzpicture}
    \caption{Full example of \sysname correct shuffling network with 14 participants.
        Each arrow indicates a single two-party lottery.
        The probability for the participants to swap positions is indicated by the arrow color
        (black: $\frac{1}{2}$, blue: $\frac{2}{3}$, yellow: $\frac{4}{5}$).}
    \label{fig:full-example}
\end{figure}

There is a general construction of correct shuffling networks for arbitrary $n$ with depth at most $2 \log(n) + 1$.
First we give a construction from $p=1/2$ swappers for $n=2^k$ for some $k \in \mathbb{N}$.
Secondly, we show how these can be combined with some additional $p \neq 1/2$ swappers to generalize to arbitrary $n$.

The construction seen in \cref{fig:sorter-8} generalizes to any power of two $n=2^k$.
In general it is defined by the sequence of rounds $r = (0, \ldots, k-1)$ with swappers:

\[ \bigcup_{b=1}^{2^r} \bigcup_{i=0}^{b-1} \{(b \cdot 2^{k-r}-i, b \cdot 2^{k-r-1}-i, \frac{1}{2})\} \]

That makes it a shuffling network of depth $k$ with $2^{k-1}$ swappers at each depth,
for a total of $k \cdot 2^{k-1}$ swappers.
We call these $S_1, S_2, S_4, \ldots$ and will use them as building blocks for the general construction below.
Correctness of these follows directly from the recursive construction.

For an arbitrary $n \in \mathbb{N}$, let $k_1 < k_2 < \cdots < k_l$ be its unique power-of-two decomposition,
i.e., $n = 2^{k_1} + 2^{k_2} + \ldots + 2^{k_l}$.
Consider the wires as batches of $k_1, k_2, \ldots, k_l$ wires.
Firstly, we apply $S_{2^{k_1}}$ to the first batch, $S_{2^{k_2}}$ to the second one, and so on.
Secondly, we merge each batch with all smaller ones.
This is formalized below.
Lastly, we apply the power-of-two shuffling networks to their respective batches again.

Merging a batch $k_i$ (for some $i \in [l]$) with smaller batches
is realized by a single round with the following swappers:

\[ \bigcup_{j=1}^{i-1} \bigcup_{o=0}^{2^{k_j}-1} \{(n - o - \sum_{t=1}^{i-1} 2^{k_t} , 2^{k_i} - o - \sum_{t=1}^{i-1} 2^{k_t}, 1 - \frac{2^{k_j}}{2^{k_i}})\} \]

It is obvious from the construction that the network has depth at most $3 k_l$.
However, it actually only has depth $2 k_l + 1$, as seen in \cref{fig:full-example}.
Importantly, the above construction is also carefully crafted to be monotonic.
This allows it to be used as the tournament structure in the \sysname protocol that will be explained afterwards.

\begin{theorem}[General Shuffling Networks]
    For any $n \in \mathbb{N}$, the above construction gives a correct monotonic shuffling network with depth $2\floor{\log(n)} + 1$.
\end{theorem}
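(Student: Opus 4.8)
The plan is to prove the three claimed properties separately: the depth bound, correctness, and monotonicity.

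For the depth bound, I would argue directly from the construction. The claim is that depth equals $2k_l + 1$ where $k_l = \floor{\log(n)}$. The construction has three phases: first apply the power-of-two networks $S_{2^{k_i}}$ to each batch in parallel, then perform the merging rounds, then apply the power-of-two networks again. The first and third phases each have depth $k_l$, since the largest batch $S_{2^{k_l}}$ has depth $k_l$ and all batches run in parallel. The merging phase contributes the remaining rounds. The subtle point claimed in the text is that the naive bound $3k_l$ overcounts, and the true depth is $2k_l + 1$. I would justify this by observing that the merge swappers for batch $k_i$ can be overlapped in time with the second application of the power-of-two networks on the already-merged lower batches, as suggested by the colored arrows in \cref{fig:full-example}; carefully accounting for which rounds can run concurrently yields the single extra round beyond $2k_l$.

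For correctness, I would proceed inductively on the number of batches $l$ in the power-of-two decomposition of $n$. The base case $l = 1$ (so $n$ is itself a power of two) is exactly the network $S_{2^{k_1}}$, whose correctness the excerpt already asserts follows from its recursive construction. For the inductive step, I would show that each merge round correctly mixes a batch of size $2^{k_i}$ with the union of all smaller batches so that every input ends up on every wire with the right marginal probability $1/n$. The key calculation is that the swap probability $1 - \frac{2^{k_j}}{2^{k_i}}$ is chosen precisely so that, after merging, the combined group behaves as a single uniformly-shuffled block of the appropriate size; reapplying the power-of-two networks in the third phase then restores full uniformity within the merged block. I would verify the marginal-probability bookkeeping by tracking, for a fixed input, the probability it lands on a given output wire through each phase.

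For monotonicity, I would appeal to \cref{def:monotonic-shuffling-network} and verify the condition swapper-by-swapper: for every swapper $(i,j,p)$ in the construction, moving a wire from position $i$ to position $j$ must not decrease the expected payout under any monotonic non-increasing payout function. Since the payout function is non-increasing in position, this reduces to checking that each swapper is oriented so the winner always moves toward a lower-indexed (higher-payout) position. Here I would inspect the three phases: the power-of-two swappers $(b \cdot 2^{k-r}-i,\ b \cdot 2^{k-r-1}-i,\ \frac{1}{2})$ all point from a higher to a lower wire index, and the merge swappers likewise move the winner toward the top batch. I expect the main obstacle to be the monotonicity check for the merge round, because there the source and target indices involve the subtraction $\sum_{t=1}^{i-1} 2^{k_t}$ and one must confirm that every such swapper still points strictly upward across the full range of $o$ and $j$; the correctness and monotonicity arguments are the technically delicate parts, whereas the depth bound is mostly careful counting.
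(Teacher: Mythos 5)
Your depth and correctness plans track the paper's own proof, which splits the theorem into three lemmas. Its depth lemma does exactly the concurrency accounting you sketch: every smaller batch finishes its first $S_{2^{k_i}}$ within the first $k_l$ rounds, and after the single merge round with the largest batch it has at most $k_l-k_i$ further merge rounds plus a depth-$k_i$ second shuffle, hence at most $k_l$ more rounds, giving $2k_l+1$ overall. Its correctness lemma is your marginal-mass argument in three phases: uniformity within each batch after the first shuffle, merge probabilities chosen so that each batch ends up holding each input with total probability mass proportional to its width, and the second application of $S_{2^{k_i}}$ spreading that mass uniformly within the batch. One phrasing of yours needs repair: after the merge round the combined wires do \emph{not} behave like a single uniformly shuffled block --- only the per-batch total mass is correct, which is all the final shuffle needs --- but the per-input, per-wire bookkeeping you propose would surface this on its own.

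The genuine gap is in monotonicity. You reduce \cref{def:monotonic-shuffling-network} to the syntactic check that every swapper points from a higher wire index to a lower one, on the grounds that the payout function is non-increasing. That equivalence holds only at the output wires. At an intermediate round, the expected payout of sitting on wire $j$ is a property of the \emph{downstream} network, and arrow orientation alone does not control it: take three wires with payout $p=(1,0,0)$, a round-one swapper $(3,2,\tfrac{1}{2})$ and a round-two swapper $(3,1,1)$; both point to lower indices, yet after round one wire~3 has expected payout $1$ while wire~2 has $0$, so moving from position $3$ to position $2$ strictly decreases expected payout and the network is not monotonic. What your plan is missing --- and what the paper's monotonicity lemma actually proves --- is the invariant that at every stage expected payouts are monotone in wire position, maintained by composing stage-wise: the power-of-two networks preserve it because each round moves winners into the better half and losers into the worse half and then treats the halves independently, and the merge rounds preserve it because lower wires of one batch are only ever paired with lower wires of the other batch. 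The index arithmetic of the merge swappers, which you flag as the delicate part, is secondary; without this downstream invariant the swapper-by-swapper orientation check proves nothing.
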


\begin{proof}
    Follows from the three lemmas below.
\end{proof}

\begin{lemma}[Correctness]
    For any $n \in \mathbb{N}$, the above construction gives a correct shuffling network.
\end{lemma}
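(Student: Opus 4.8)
The plan is to track a single input $x$ through the whole network and show that its final position is uniform over all $n$ output wires; since the argument is the same for every input, the network is correct. I would split this into two sub-goals: (A) after the full construction $\Pr[x \text{ lands in batch } i] = 2^{k_i}/n$ for every batch $i$, and (B) conditioned on landing in batch $i$, $x$ is uniform over that batch's $2^{k_i}$ wires; together these give $\Pr[x \text{ on a fixed wire}] = \frac{2^{k_i}/n}{2^{k_i}} = 1/n$. The bridge between the two is a \emph{uniformization} observation: applying any correct shuffling network $S$ to a set $W$ of wires (with fresh independent coins) induces a random permutation $\sigma$ of $W$ that is independent of $x$'s incoming position and satisfies $\Pr[\sigma(a)=b]=1/|W|$ for all $a,b\in W$ (this is precisely correctness of $S$). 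Hence $\Pr[x \text{ ends at } b] = \sum_{a\in W}\Pr[x\text{ at }a]\cdot\frac{1}{|W|} = \frac{1}{|W|}\Pr[x\in W]$, so $S$ makes $x$ uniform within $W$ while preserving $\Pr[x\in W]$.

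With this tool the base case (the power-of-two networks $S_{2^k}$) is a short induction on $k$ matching the recursive split seen in \cref{fig:sorter-8}: the first layer pairs wire $i$ with wire $i+2^{k-1}$ at probability $\tfrac12$, which sends $x$ into each half with probability exactly $\tfrac12$ no matter where it starts; applying $S_{2^{k-1}}$ to each half (correct by the inductive hypothesis) then uniformizes $x$ within its half, giving probability $\tfrac12\cdot\tfrac{1}{2^{k-1}}=2^{-k}$ on each wire. Phase~1 of the general construction applies these $S_{2^{k_i}}$ to the individual batches, so afterwards $x$ is uniform within its starting batch.

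The heart of the argument is the merge phase, which I would analyze by processing the per-batch merge rounds from the largest batch down and carrying an invariant. Writing $N_i=\sum_{t\le i}2^{k_t}$, the claim is that after the round merging batch $i$ (connecting every wire of the smaller batches $1,\dots,i-1$ by a bijection to a distinct wire of batch $i$, each swapper at one common probability $p_i$) one has $\Pr[x\in\text{batch }j]=2^{k_j}/n$ for all $j\ge i$, while every not-yet-finalized batch $1,\dots,i-1$ stays uniform within itself and these share the leftover mass $N_{i-1}/n$. Two facts drive the step: first, because the smaller batches are connected by a full bijection at a single probability, each of their wires is treated symmetrically, so they remain internally uniform and any mass arriving from batch $i$ lands uniformly; second, expressing the new $\Pr[x\in\text{batch }i]$ in terms of the incoming split $a=\Pr[x\in\text{batch }i]$ and $b=\Pr[x\in\text{batches }1,\dots,i-1]$ (with $a+b=N_i/n$ by the invariant) shows the dependence on $a$ cancels exactly when the computation forces $p_i=2^{k_i}/N_i$, leaving $\Pr[x\in\text{batch }i]=2^{k_i}/n$ independent of where $x$ originated. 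After the final merge every batch carries its correct membership probability, and Phase~3 re-applies each $S_{2^{k_i}}$, which by the uniformization observation restores within-batch uniformity without altering memberships, establishing (A) and (B).

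The main obstacle I expect is precisely this merge computation, because correctness must hold \emph{simultaneously for every starting batch}: an input born in batch $i$, one born in a larger batch (and thus pushed down into batch $i$ by an earlier round), and one born in a smaller batch all present different intermediate batch-membership splits, yet the single probability $p_i$ must normalize all of them to $2^{k_i}/n$ at once. Making this go through is what dictates the common per-round probability and the full-bijection wiring, and it is where I would spend the most care — pinning down $p_i=2^{k_i}/N_i$ and verifying that the not-yet-merged batches genuinely stay uniform so that the invariant composes cleanly down to the last round. The depth and monotonicity claims are left to the remaining two lemmas.
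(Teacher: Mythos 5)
Your proposal is correct, and at the skeleton level it is the same route as the paper's proof: the paper also argues in three phases (first applications of $S_{2^{k_i}}$ make each input uniform within its batch, the merge rounds give every batch probability mass proportional to its size, and the second applications of $S_{2^{k_i}}$ spread that mass evenly within each batch). The difference is that the paper's middle step is a bare assertion --- the probabilities are ``carefully chosen'' --- while you actually prove it, and your two additions are exactly the right ones. Your uniformization observation is the rigorous form of phases one and three (it is sound because the swapper coins of a later sub-network are fresh, hence the induced permutation is independent of the input's current position; note it also dispenses with any need for within-batch uniformity going into phase three, which genuinely fails for the largest batch after its merge round). Your largest-first invariant, with the two mass components (one concentrated uniformly in the input's birth batch, one spread at $1/n$ per wire from above) handled by the full-bijection-at-one-probability symmetry, composes cleanly; I checked the round computation and it closes as you claim.

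One substantive remark: your necessity argument --- for an input born in a smaller batch, the mass it deposits in the largest batch equals that merge round's swap probability and can never change afterwards --- pins down the common value $p_i = 2^{k_i}/N_i$ uniquely, and this \emph{disagrees} with the construction as printed in the paper. The paper's merge formula assigns the $j$-dependent probability $1 - 2^{k_j}/2^{k_i}$, and \cref{fig:full-example} labels the first merge round with $4/5$ and $2/3$, whereas correctness forces the single value $8/14 = 4/7$ on all six swappers there: under the printed values, an input starting on the bottom two wires ends in the top batch with probability $4/5 \neq 8/14$, so it hits wire $1$ with probability $1/10 > 1/14$. Your value coincides with the figure only on the final merge round, since $N_i = 2^{k_i} + 2^{k_j}$ when a single smaller batch remains, giving $4/6 = 2/3$. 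So your proof does more than fill the gap the paper leaves open; the ``merge computation'' you flagged as the main obstacle is precisely where the paper's stated probabilities break, and your derivation is the correct repair.
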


\begin{proof}
    Let $k_1 < k_2 < \cdots < k_l$ be the unique power-of-two decomposition of $n$,
    i.e., $n = 2^{k_1} + 2^{k_2} + \ldots + 2^{k_l}$.
    After applying $S_{2^{k_i}}$ to batch $i$ for the first time,
    each input is equally likely ($2^{-k_i}$) to be on each of the batches wires.

    Probabilities of the swappers in the merging rounds are carefully chosen,
    such that it ensures that afterwards any batch $i$ has the correct probability density
    (i.e., proportional to the number of wires within the batch) of each input wire appearing.

    Once each batch $i$ has the correct probability mass,
    the second application of shuffling network $S_{2^{k_i}}$ distributes it evenly across all wires within the batch.
    This follows directly from correctness of $S_{2^{k_i}}$.
\end{proof}

\begin{lemma}[Shallow Depth]
    For any $n \in \mathbb{N}$, the above construction gives a shuffling network with depth $2\floor{\log(n)} + 1$.
\end{lemma}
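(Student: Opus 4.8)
The plan is to bound the depth stage by stage and then exhibit an explicit pipelined schedule that realizes the claimed figure exactly. I would begin by fixing notation: if $n = 2^{k_1} + \cdots + 2^{k_l}$ with $k_1 < \cdots < k_l$, then $2^{k_l} \le n < 2^{k_l+1}$, so the top exponent satisfies $k_l = \floor{\log n}$ and the goal becomes a depth of $2k_l + 1$. The construction has three stages: (A) the first application of the $S_{2^{k_i}}$ to the batches, (B) the $l-1$ merging rounds, and (C) the second application of the $S_{2^{k_i}}$.

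For stage (A), the networks act on disjoint batches of wires, so they run concurrently; since $S_{2^{k_i}}$ has depth $k_i \le k_l$, right-aligning them to finish simultaneously places stage (A) in depths $1,\ldots,k_l$. Symmetrically, I would right-align stage (C) to end at the final depth $2k_l + 1$, so the second copy of $S_{2^{k_i}}$ occupies depths $2k_l + 2 - k_i,\ldots,2k_l+1$; the largest batch's second shuffle therefore starts at depth $k_l + 2$.

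The crux is fitting all of stage (B) into the single round of slack this leaves. I would schedule the merge of batch $i$ at depth $2k_l + 1 - k_i$, i.e. in the round immediately preceding the second shuffle of that same batch, so the merge feeds directly into the re-shuffle of its own batch. These depths are distinct (the $k_i$ are), and one verifies from the construction that the merge of batch $i$ is a single round of swappers touching only wires of batches $1,\ldots,i$. A merge therefore shares a depth only with the second shuffle of some strictly larger batch $j>i$ (requiring $2k_l+1-k_i$ to land inside $[\,2k_l+2-k_j,\,2k_l+1\,]$ forces $k_i<k_j$, hence $j>i$), and those operate on the disjoint wire-set of batch $j$; so no two operations ever collide on a shared wire. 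The only merge that cannot be hidden under stage (C) is that of the largest batch $l$: it depends on the stage-(A) output of every batch and must precede its own second shuffle at depth $k_l+2$, so it occupies the lone round at depth $k_l+1$. This single round is exactly the ``$+1$''.

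Adding up, stage (A) uses depths $1,\ldots,k_l$, the largest merge uses depth $k_l+1$, and stage (C) with the remaining merges interleaved uses depths $k_l+2,\ldots,2k_l+1$, for a total of $2k_l+1 = 2\floor{\log n}+1$. I expect the main obstacle to be the careful bookkeeping of which wires each merging round actually touches, so as to certify rigorously that every merge scheduled concurrently with a stage-(C) layer is wire-disjoint from it; once that disjointness is pinned down, the depth count is immediate.
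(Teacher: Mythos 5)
Your proof is correct and takes essentially the same route as the paper's: right-align the per-batch shuffles, give the largest batch's merge its own round at depth $k_l+1$, and hide every other merge round under the second shuffles of strictly larger, wire-disjoint batches, exploiting distinctness of the exponents $k_1 < \cdots < k_l$. You are in fact more explicit than the paper, whose proof only gives the coarser critical-path count (depth $k_i$ before the largest merge, then at most $k_l - k_i$ further merge rounds plus $k_i$ afterwards); your exact placement of batch $i$'s merge at depth $2\floor{\log(n)} + 1 - k_i$, together with the disjointness check, matches the schedule depicted in the paper's 14-participant example figure.
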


\begin{proof}
    Looking at the largest batch, we have $S_{2^{k_l}}$ two times,
    with depth $k_l$ each, and an additional round for merging.
    So, the entire network already has depth at least $2 k_l + 1$.
    Consider any other $k_i$ (for any $i \in [l-1]$).
    It is also involved in the merging of the largest batch.
    Before that it has $S_{2^{k_i}}$ of depth $k_i < k_l$.
    After the merge round it has up to $k_l-k_i$ more merge rounds and another $S_{2^{k_i}}$,
    for a combined depth of at most $k_l$.
    Thus, the entire network has depth exactly $2 k_l + 1$.
\end{proof}

\begin{lemma}[Monotonicity]
    For any $n \in \mathbb{N}$, the above construction gives a monotonic shuffling network.
\end{lemma}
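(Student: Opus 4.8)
The plan is to reduce the lemma to a purely local comparison and verify it by working backwards from the output. Let $D$ be the depth of the network, and for a position $x$ and depth $d$ let $E_d(x)$ denote the expected final payout of the token on wire $x$ at the input of depth $d$, where the expectation is over the randomness of all swappers at depths $\ge d$ and the monotone payout function is applied to the final positions; thus $E_{D+1}(x) = p(x)$ at the output. Unwinding \cref{def:monotonic-shuffling-network}, a swapper $(i,j,p)$ at depth $d$ (directed so that the winner moves up, i.e. $j < i$) is monotonic \emph{exactly} when $E_{d+1}(j) \ge E_{d+1}(i)$. Since such a swapper updates only its two endpoints, via the convex combinations $E_d(i) = (1-p)E_{d+1}(i) + pE_{d+1}(j)$ and $E_d(j) = (1-p)E_{d+1}(j) + pE_{d+1}(i)$, the whole claim becomes a backward induction over depth certifying this inequality for every swapper.

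To keep the argument robust to the choice of payout function, I would phrase the invariant in terms of \emph{stochastic dominance} of the downstream routing distributions rather than of $E$ directly. Writing $\mu_{d,x}$ for the distribution over final wires of the token at $(d,x)$, declare $j \trianglelefteq i$ when $\mu_{d,j}$ first-order dominates $\mu_{d,i}$, i.e. the token on $j$ is at least as likely to finish in any top block $\{1,\dots,m\}$. By Abel summation, $j \trianglelefteq i$ together with $p$ non-increasing yields $E_{d+1}(j) \ge E_{d+1}(i)$ for \emph{every} monotone payout simultaneously, which is precisely what the fair-shuffling instantiation needs. The target then sharpens to: at each swapper the endpoints are dominance-ordered, with the winner's destination dominating.

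The verification splits along the three phases of the construction. For the \emph{second} family of blocks $S_{2^{k_i}}$ (closest to the output), correctness of $S_{2^{k}}$ shows that each block uniformizes its token over the $2^{k_i}$ wires of its batch; hence at the input of this phase the routing distribution, and so $E$, is constant on each batch and equal to that batch's average payout. Because the batches occupy consecutive position blocks ordered by payout, these batch-averages are non-increasing, which immediately settles the \emph{last} merge round: its winner is routed from a lower batch into a higher (better-average) batch, so the endpoints are dominance-ordered. The \emph{first} family of $S_{2^{k_i}}$ blocks consists of intra-batch $p = 1/2$ swappers whose two endpoints have, by the symmetry of $S_{2^{k}}$, identical downstream distributions, so the required inequality holds there with equality and the induction closes trivially on those layers.

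The hard part will be the merge rounds, since their probabilities $1 - 2^{k_j}/2^{k_i}$ exceed $1/2$: from the update formulas this means a merge swapper \emph{inverts} the expected payoff of its endpoints, so that $E_d(j) \le E_d(i)$ even though $E_{d+1}(j) \ge E_{d+1}(i)$. Consequently the profile $E_d$ is \emph{not} globally monotone between phases, and a naive "the profile stays sorted" induction fails; moreover, for a merge swapper that is not the last merge round, its downstream contains further $p > 1/2$ rounds rather than the clean per-batch-constant profile. The crux is therefore to carry the dominance invariant directly on the routing distributions $\mu_{d,\cdot}$ through each $p > 1/2$ merge swapper, leaning on the same careful choice of merge probabilities that the correctness lemma exploits, now to show that the winner is always sent toward the batch whose distribution dominates. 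I expect this propagation of dominance across the inverted merge layers, rather than any of the within-batch reasoning, to be the technical heart of the proof.
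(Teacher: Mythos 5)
Your setup is sound and, in spirit, matches what the paper's (terse) proof gestures at: reduce \cref{def:monotonic-shuffling-network} to the per-swapper condition that the arrow points toward weakly larger downstream expected payout, and propagate an invariant backward through the layers; the stochastic-dominance strengthening is a nice way to cover all monotone payout functions at once. But there are two genuine gaps. First, your claim that the swappers inside the \emph{first} family of $S_{2^{k_i}}$ blocks have \emph{identical} endpoint downstream distributions is false: by the recursive structure, the two endpoints of such a swapper are routed into the better and the worse half of their (sub-)batch respectively --- that asymmetry is the whole point of the block --- and those halves have different merge exposure further downstream (for $n=14$, in the size-8 batch the top quarter is never merged, the middle is paired with the size-4 batch, and the bottom with the size-2 batch). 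So the condition there is a genuine inequality, not an equality, and it does not ``close trivially''; what you need, and what the paper's proof carries, is that downstream expected payouts are monotone non-increasing \emph{within each batch} (``winners into the better half,'' plus the assertion that before the merge rounds expected payouts within each batch are monotonic).

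Second, and decisively, you explicitly leave the merge-round case open, and that case is the substance of the lemma. The paper resolves it with two observations you are missing. (i) Merge rounds pair wires across batches order-preservingly (lower wires only with lower wires), the batches occupy contiguous position blocks ordered by decreasing size and hence decreasing average payout, and --- because the merge rounds are scheduled largest batch first --- after the merge round for batch $i$ a batch-$i$ wire's only remaining processing is the second $S_{2^{k_i}}$, so its downstream expectation collapses to the batch average $A_i$, while the small-batch endpoint's downstream distribution is supported on batches $1,\dots,i-1$, giving expectation at most $A_{i-1} \le A_i$. This settles \emph{every} merge swapper, not only the last round, which is the only one your batch-average argument reaches. (ii) The ``inversion'' you flag ($p>1/2$ flips the order of the swapper's \emph{input} expectations across the two batches) is real but harmless: no swapper upstream of a merge round ever compares wires across batches, since all block swappers are intra-batch, so only within-batch monotonicity must be propagated backward --- and one checks, using the ordering of the batch averages $A_1 \le A_2 \le \cdots$, that each merge round preserves it (within a batch, all paired wires are shifted by order-preserving convex combinations, with exposure to worse partners increasing toward worse positions). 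In short, your correct observation that the globally sorted profile fails is exactly the reason the invariant must be within-batch rather than global; your plan is repairable along these lines, but as written it stops short of proving the lemma precisely where the proof is needed.
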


\begin{proof}
    Monotonicity of the power-of-two sub-networks follows from their recursive structure:
    For each round, they move winners into the better half,
    losers into the worse half, and then treat these separately.
    Since monotonicity is a composable property, it only remains to show that the merging rounds preserve monotonicity.
    
    Consider the merging round for batch $i$.
    All batches which are lower (in the monotonic order of expected payout)
    are only potentially swapped with lower wires in batch $i$.
    For any two batches being merged,
    all wires which are lower in one batch are only potentially swapped with lower wires in the other batch.
    Therefore, before the merge rounds expected payouts within each batch are monotonic.
\end{proof}

The above construction is not minimal in the number of swappers. 
For some $n$ (i.e., for any $n$ that is not of the form $2^k-1$)
there are some ``no-op'' swappers, where swapping does not change expected payout for either party.
These swappers can be pruned when initially constructing the tournament tree.
Also, depending on the payout function, certain games can be pruned.
Specifically, if for some $i^{*}$, we have $p(i) = 0$ for all $i \in \{i^{*}+1, \ldots, n\}$,
then any game not relevant to the ranking of the top $i^{*}$ outputs (i.e. in the ``bottom right'' of the network) can be ignored.
In practice participants have no incentive to pay transaction fees to participate in these games anyways.

In \cref{alg:multi-winner-lottery} you can see the general \sysname multiparty lottery protocol.
It can be instantiated with any correct monotonic shuffling network as its tournament structure.
For example, it could also be applied to the basic tournament tree shuffling network \cref{fig:sorter-1-winner},
in order to implement the single winner lottery from \cref{sec:single-winner-lottery}.

Note that, when using a \emph{correct} but not \emph{perfect} shuffling network,
outcomes of this protocol are dependent on the initial order of the participants.
While all positions have the same expected value, some combinations of positions have higher variance.
This is relevant if we assume the setting where participants are able to enter the lottery with more than one ticket.
In this case a risk averse user may try to pick their ticket positions in a way that minimizes variance.

The effectiveness of this strategy can be mitigated by adding a single commit-and-reveal round at the beginning involving all participants.
Values revealed in this round determine the initial permutation of participants.
Just as in any two-party round, not revealing a value causes disqualification.
To bias this decision the adversary thus has to reduce their expected payout.
So, for an almost risk-neutral adversary this defeats the strategy.

In case we do want to handle the combined multi-winner case with weighted winning probabilities with,
we may generalize the notion of shuffling networks to account for weights.

\begin{definition}[Weighted Shuffling Network]
    \label{def:weighted-shuffling-network}
    A \emph{weighted} shuffling network is a shuffling network where
    (i) each input wire $i$ has a weight $w_i$ associated with it,
    and (ii) each swapper has an associated probability function $p(w_i, w_j)$ instead of a fixed probability.
    Also, each swapper has an associated weight aggregation function,
    so, the winner's wire is newly assigned the aggregate of the incoming weights $a(w_i, w_j)$,
    whereas the loser's wire retains their previous weight.
\end{definition}

Constructing correct weighted shuffling networks,
just as constructing perfect shuffling networks, even for the unweighted case,
is not straightforward.
Both of these problems are outside the scope of this work but are interesting future work.
Given the \sysname protocol they could immediately be turned into practical lotteries.

%
%
%
%
%
%
%


\begin{figure}[!t]
    \footnotesize
    \begin{algorithmic}
        \State $\lambda$ \Comment{security parameter}
        \State $B$ \Comment{bulletin board / blockchain}
        \State $P\ (n \coloneqq |P|,\ k \coloneqq \floor{\log n})$ \Comment{player IDs}
        \State $T$ \Comment{tournament structure (array of rounds, each an array of swappers)}
        \State $p_n$ \Comment{payout function}
        \State $\mathsf{StartTime}, \mathsf{TimePerRound}$ \Comment{bounds for phases}
        \State $\mathsf{BuyIn}$ \Comment{buy-in amount}
        \Statex
        \Function{Commit}{$id, c$}
            \State \textbf{assert}\ $\mathsf{Commit} \in \textsc{CurrentPhase}(B)$
            \State $P \gets P \cup \{id\}$
            \State $B.\mathsf{Publish}(id, \textsf{Commit}(0), c)$
            \State $B.\mathsf{DepositFrom}(id, \textsf{BuyIn})$
        \EndFunction
        \Statex
        \Function{Open}{$id, r, c_\mathsf{new}, v$}
            \State \textbf{assert}\ $\mathsf{Open}(i) \in \textsc{CurrentPhase}(B)$
            \State $c \gets B.\mathsf{Read}(id, \mathsf{Commit}(r))$
            \State \textbf{assert}\ $c \neq \bot$
            \State $c' \gets h(c_\mathsf{new}\ ||\ v)$
            \State \textbf{assert}\ $c = c'$
            \State $B.\mathsf{Publish}(id, \mathsf{Open}(r), v)$
            \State $B.\mathsf{Publish}(id, \mathsf{Commit}(r+1), c_\mathsf{new})$
        \EndFunction
        \Statex
        \Function{SettleMatch}{$r, m$}
            \State $(i, j, p) \gets T[r][m]$
            \State \textbf{assert}\ $\mathsf{Settle}(i) \in \textsc{CurrentPhase}(B)$
            \State $a \gets B.\mathsf{Read}(i, \mathsf{Open}(r))$
            \State $b \gets B.\mathsf{Read}(j, \mathsf{Open}(r))$
            \If{$a = \bot$}
                \State $\textsf{winner} \gets j$
            \ElsIf{$b = \bot$}
                \State $\textsf{winner} \gets i$
            \ElsIf{$\textsf{FixedPointReal}(a \oplus b) < \frac{P[i].weight}{P[i].weight + P[j].weight}$}
                \State $\textsf{winner} \gets i$
            \Else
                \State $\textsf{winner} \gets j$
            \EndIf
            \If{$\mathsf{winner} = i$}
                \State $P[i], P[j] \gets P[j], P[i]$ \Comment{swap players}
            \EndIf
        \EndFunction
        \Statex
        \Function{SettleLottery}{$\ $}
            \State \textbf{assert} $\mathsf{Settle} \in \textsc{CurrentPhase}(B)$
            \For{$pos \in \{ 1, 2, \ldots, n \}$}
                \State $\mathsf{amount} \gets p_n(pos) \cdot B.\mathsf{TotalDeposited}()$
                \State $B.\mathsf{WithdrawTo}(P[pos].id, \mathsf{amount})$
            \EndFor
        \EndFunction
        \Statex
        \Function{CurrentPhase}{$B$}
            \If{$B.\mathsf{Time}() \le \mathsf{StartTime} + \mathsf{TimePerRound}$}
                \State \Return $\{\mathsf{Commit}\}$
            \ElsIf{$B.\mathsf{Time}() \le \mathsf{StartTime} + 2 \cdot \mathsf{TimePerRound}$}
                \State \Return $\{\mathsf{Open(0)}\}$
            \ElsIf{$B.\mathsf{Time}() \le \mathsf{StartTime} + (2+r) \cdot \mathsf{TimePerRound}$}
                \State \Return $\{\mathsf{Settle(r-1)}, \mathsf{Open(r)}\}$
            \EndIf
            \State \Return $\{\mathsf{Settle}\}$
        \EndFunction
    \end{algorithmic}
    \caption{\sysname Multi-Winner Lottery.}
    \label{alg:multi-winner-lottery}
\end{figure}

\subsection{Perfect Shuffling}
\label{sec:perfect-shuffling}

Instead of just using the two-party lottery,
we can introduce a new similar building block.
This new sub-protocol involves some subset of $m \le n$ out of all $n$ participants.
It is realized as a single round of revealing previously committed values.
The $m$ values are then used to simulate up to $m-1$ sequential two-party lotteries.
Later lotteries in the sequence are only evaluated if the earlier ones did not result in a swap.
So, either all $m-1$ lotteries result in no swap,
or the first swap is applied and all other possible swaps are ignored.

Dependencies between lotteries have to be carefully examined.
Specifically, for fairness we require that aborting an earlier lottery does not increase expected value for any participant in any potential later one.
An adversary controlling multiple participants can thus not increase their expected value by aborting any sub-protocol.

Now onto the specific multiparty lottery construction.
For any round $r \in \{0, 1, \ldots, n-2\}$,
there are $n-r$ participants remaining for consideration.
In every round the remaining participants run the above protocol to determine the last place among themselves.
The participant selected for last place is then removed for the next round.

Consider the unweighted case.
The probabilities are $\frac{1}{n-r}, \frac{1}{n-r-1}, \ldots, \frac{1}{2}$.
In the case where all participants follow the protocol,
this leads to all participants being selected with probability $\frac{1}{n-r}$ each.
Under honest participation, it basically implements a variant of the Fisher-Yates shuffle~\cite{fisher_yates,durstenfeld_fisher_yates}.
That is, it generates all permutations with equal probability.

If any of the $m$ participants does not reveal their value, there are two cases to consider:
Either (i) a swap happened before the participant would have participated in their two-party lottery
or (ii) one of the two-party lotteries needs to be evaluated but one of the two participants aborted.
In the first case, the swap is performed and participants who aborted are then placed in the next places from the back
(reducing the number of remaining rounds accordingly).
In the second case, the two-party lottery is treated as a loss for the participant who aborted,
afterwards any other participants matching the first case are handled.
As previously, in the unlikely case of multiple participants aborting at the same time, ties can be broken arbitrarily.

\begin{theorem}
    If all participants honestly follow the protocol,
    the above construction generates all permutations of the $n$ participants with equal probability.
\end{theorem}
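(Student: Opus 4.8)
The plan is to reduce the claim to the standard fact that a Fisher-Yates shuffle produces the uniform distribution over permutations, by first showing that each round of the construction performs one step of that shuffle correctly. I would proceed in three stages: (1) analyze a single invocation of the sub-protocol on $m$ remaining participants and show it selects each of them for the current last place with probability exactly $1/m$; (2) argue that the selections made in distinct rounds are mutually independent conditioned on the surviving set; and (3) assemble these into the global uniformity statement via the usual telescoping-product argument.

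For stage (1), I would fix an honest execution, in which every participant reveals a uniformly random value and the contribution to the $k$-th simulated two-party lottery is derived independently (for instance as $h(s \| k)$, as suggested earlier for the PRNG-based variants), so that the $m-1$ lotteries are independent and the $k$-th swaps with probability $p_k = \frac{1}{m-k+1}$ --- i.e.\ the sequence $\frac{1}{m}, \frac{1}{m-1}, \ldots, \frac{1}{2}$ stated in the text. Under the ``first swap wins'' rule, the event that the $k$-th participant is chosen for last place is exactly the event that lotteries $1, \ldots, k-1$ do not swap while lottery $k$ does, whose probability telescopes:
\[
\left(\prod_{t=1}^{k-1}\left(1 - \tfrac{1}{m-t+1}\right)\right)\frac{1}{m-k+1}
= \frac{m-k+1}{m}\cdot\frac{1}{m-k+1} = \frac{1}{m},
\]
while the remaining ``no swap anywhere'' branch, which designates the default $m$-th participant, has probability $\prod_{t=1}^{m-1}\left(1-\tfrac{1}{m-t+1}\right) = \frac{1}{m}$ as well. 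Thus the $m-1$ swap events together with the no-swap branch partition the probability space into exactly $m$ outcomes, one per participant, each of weight $1/m$.

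For stage (2), I would observe that round $r$ consumes only randomness derived from its own reveals and that honest reveals are independent across rounds; since the round-$r$ selection is a deterministic function of that round's randomness and of the set of survivors (itself fixed by earlier selections), its conditional distribution given any history is uniform over the $n-r$ remaining participants. For stage (3), I would then note that any fixed permutation $\sigma$ arises from exactly one sequence of per-round selections, so its probability is $\prod_{r=0}^{n-2}\frac{1}{n-r} = \frac{1}{n!}$; as this is identical for all $n!$ permutations, the output distribution is uniform.

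The main obstacle I expect is stage (1): one must pin down exactly how the $m$ revealed values drive the $m-1$ sequential lotteries, confirm that these lotteries are genuinely independent under honest execution (so that the conditioning implicit in ``first swap wins'' factors cleanly), and check that the no-swap branch maps to a distinct participant, so that the $m$ events cover every participant without overlap or gap. Once single-round uniformity and cross-round independence are established, stages (2) and (3) are the routine Fisher-Yates bookkeeping.
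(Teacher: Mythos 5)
Your proposal is correct and takes essentially the same approach as the paper: the paper's entire proof is a one-line appeal to equivalence with the Fisher--Yates/Durstenfeld shuffle, and your three stages simply spell out that equivalence in detail (the telescoping computation showing each round selects each of the $m$ remaining participants for last place with probability exactly $1/m$, plus cross-round independence and the $\prod_{r=0}^{n-2}\frac{1}{n-r}=\frac{1}{n!}$ bookkeeping). The obstacle you flag --- that the paper underspecifies exactly how the $m$ revealed values drive the $m-1$ simulated lotteries --- is real, but under honest uniform reveals your independence assumption is the natural reading and matches the paper's intent.
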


\begin{proof}
    Follows from equivalence to~\cite{fisher_yates,durstenfeld_fisher_yates}.
\end{proof}

\begin{theorem}
    The above construction is fair, in the sense of \cref{def:fairness}.
\end{theorem}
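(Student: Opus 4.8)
The theorem asserts that the perfect-shuffling construction (Section~\ref{sec:perfect-shuffling}) is fair in the sense of Definition~\ref{def:fairness}: conditioned on any possible misbehavior of the other participants, each honest participant $i$ is guaranteed an expected payout of at least $p_i$, the payout they would receive under fully-honest execution.

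**What fairness means here.** The payout is determined by a ranking produced through $n-1$ rounds. In round $r$ the $n-r$ remaining participants run a sub-protocol that selects a last place via a sequence of up to $m-1$ simulated two-party lotteries. The key difference from the tree construction is that several two-party lotteries are chained: later ones fire only if earlier ones produced no swap. So the central fairness question is whether an adversary can gain by aborting one of these chained sub-lotteries.

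**Proof plan.** Let me think about how I would prove this.

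First I would reduce fairness to a per-round, per-participant monotonicity statement, exactly analogous to the Fair Shuffling theorem earlier in the excerpt. For an honest participant $i$, I want to argue that no matter how the other participants behave, $i$'s expected payout is at least $p_i$. The natural approach is to fix $i$ and argue round by round. In each round, participant $i$ either stays in the pool or is selected for last place; the payout depends only on the round in which $i$ is eliminated. So it suffices to show that honest play by $i$ guarantees the correct elimination-round distribution regardless of others' deviations.

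Second, and this is the crux, I would analyze the effect of an adversary aborting one of the chained two-party lotteries. The protocol rules (Section~\ref{sec:perfect-shuffling}) specify that an abort is treated as a loss for the aborting party, and that "aborting an earlier lottery does not increase expected value for any participant in any potential later one." I would formalize this: because the simulated lotteries are evaluated in a fixed sequence and a no-reveal is a forced loss, any deviation by an adversary in a sub-lottery only moves probability mass in a direction that is (weakly) unfavorable to the deviator and (weakly) favorable --- or neutral --- to the honest participant $i$. The honest $i$'s own win probability against any fixed opponent is exactly $\tfrac{1}{2}$ (or the correct weighted value) by following the protocol and revealing, since this is inherited directly from the two-party lottery's one-sided-biasability guarantee: the worst case for $i$ is that every opponent plays to minimize $i$'s chances, and by the abort-resistance of the underlying two-party lottery (Lemma, Abort Resistance), this cannot push $i$ below the honest probability.

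Third, I would assemble these per-lottery guarantees into the per-round claim: in round $r$, honest participant $i$ is selected for last place with probability at most the honest value $\tfrac{1}{n-r}$, and this bound holds conditioned on arbitrary adversarial behavior, because each simulated sub-lottery in the chain contributes independently to $i$'s survival and the adversary cannot improve their position in the chain by aborting. Summing the resulting elimination-round distribution against the monotonic payout function $p_n$, and invoking that honest play dominates as in Theorem~\ref{thm:split-dominance}'s spirit, yields expected payout at least $p_i$.

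**Main obstacle.** The hard part will be the dependency analysis between the chained sub-lotteries within a single round. Unlike the tree-based construction where each swapper is independent, here a later two-party lottery fires conditionally on earlier ones not swapping. I must rule out a subtle attack where an adversary strategically aborts an early sub-lottery to change \emph{which} later sub-lottery fires, thereby manipulating the conditional distribution to harm an honest participant downstream. The resolution is to show that the conditional firing structure, combined with the forced-loss-on-abort rule, makes every reachable outcome weakly dominated (from the adversary's perspective) by honest play --- so I would carefully verify that aborting never shifts probability mass toward eliminating honest $i$ earlier than the honest schedule would. Once that monotone-coupling argument is in place, the remainder is routine bookkeeping over rounds.
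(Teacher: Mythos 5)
Your skeleton --- bound the honest participant's per-round elimination hazard by $\tfrac{1}{n-r}$, then integrate against the monotone payout via stochastic domination --- leaves unproven exactly the step on which the theorem hinges, and you say so yourself: the chained-dependency analysis is named as the ``main obstacle'' and then discharged by asserting that a ``monotone-coupling argument'' will be put in place. That coupling \emph{is} the content of the theorem. The two-party abort-resistance you invoke only governs a lottery that actually fires; it says nothing about a coalition whose behavior in earlier links of the chain determines \emph{whether} the honest participant's lottery fires at all. Your claim that ``each simulated sub-lottery in the chain contributes independently to $i$'s survival'' is precisely what is false here, since only the first swap in the sequence is applied --- this is what distinguishes this construction from the tree-based one, and your hazard bound cannot be waved through as ``routine bookkeeping'' when manipulation of which link fires is the one lever a multi-identity adversary has.

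The paper closes this gap by a different, non-quantitative route that never bounds per-round probabilities at all. It argues in outcome space: being selected for last place means receiving the worst remaining slot, while \emph{not} being selected removes at least the worst remaining slot from one's outcome set (more slots if others abort), so by monotonicity of the payout function non-selection has non-negative and selection non-positive expected utility. Every deviation is then mapped to a selection event: an abort in a lottery that fires is treated as a loss, i.e., selection; an abort in a lottery that is \emph{ignored} (because an earlier swap already fired) results in placement at the next free spot from the back, which is payout-equivalent to losing or aborting in a future round. This last reduction is the idea your proposal is missing --- it is what neutralizes the conditional-firing dependency you flagged, because under the ``first swap applied, rest ignored'' rule an abort in an unfired link affects nothing except the aborter's own placement. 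You gesture at the forced-loss-on-abort rule but never produce the placement-at-the-back $=$ future-round-loss equivalence or the worst-slot-removal monotonicity argument, so your proof does not go through as written. (Minor: the appeal to the spirit of \cref{thm:split-dominance} is out of place --- that theorem concerns splitting buy-ins across entries and plays no role in abort resistance here.)
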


\begin{proof}
    Not being selected for swapping into last place in any round of the protocol eliminates
    at least the single worst option from the possible outcomes (more if other participants abort)
    and other possible outcomes become proportionally more likely.
    Thus, not being selected has non-negative expected utility and being selected has non-positive expected utility.
    Losing or aborting as part of an executed two-party lottery ensures being selected for swapping into last place.
    Aborting as part of an ignored two-party lottery leads to being placed in the next free spot from the back,
    which can be interpreted as aborting or losing in a future round.
    Therefore, aborting in any case has non-positive expected utility.
\end{proof}

In practice, the above construction can be realized in just $n-1$ rounds with $\mathcal{O}(n^2)$ published messages.
This is practically feasible only for small $n$.
However, it solves perfect shuffling and not just correct shuffling.

It could also be adapted to case with weights.
However, it is hard to get the same distribution of permutations as the sequential single-party lotteries.
Since, we are determining the order from last place to first,
in the first round we would need to know the probability of each participant appearing in last place.
Efficiently calculating these seems impossible,
considering their dependence on the exponentially large game tree.
This problem can be circumvented by instead selecting participants proportional to the inverse of their weight.
However, this will change the distribution.

\subsubsection*{Leader Aversion}

This protocol is also a natural fit to solve the leader aversion case,
i.e., where all participants have negative utility for being selected and utility 0 for not being selected.
It can be used to solve single selection under leader aversion in a single round with $\mathcal{O}(n)$ messages
and $k$-leader selection in $k$ rounds with $\mathcal{O}(kn)$ messages.
Specifically, running the first $k$ rounds of the above protocol assigns the last $k$ places in the correct manner.

\section{Privacy-Preserving Lotteries}
\label{sec:unlinkable-lotteries}

In this section we describe how to construct lotteries that (at least partially) preserve the participants' privacy.
Non-interactive zero-knowledge proofs and a cheap optional cooperative sub-protocol
can turn the basic lottery protocols from the previous section into unlinkable ones.
More formally, in addition to the basic lottery properties from \cref{sec:model},
they need to achieve the following property:

\begin{definition}[Unlinkability]
\label{def:unlinkability}
    Consider a multiparty lottery with $n$ participants with equal winning probabilities.
    Given that a targeted user makes $k \le n$ entries to the lottery from the same address.
    An outside observer of the lottery can not have more than $\frac{k}{n}$ confidence that any entry belongs to the targeted address.
    Even an adversary aware of the identities behind $f \le n-k$ of the entries to the lottery,
    can only have $\frac{k}{n-f}$ confidence that any of the remaining entries belongs to the targeted address.
\end{definition}

This property is not easily generalized to arbitrary winning probabilities.
Without heavy usage of zero-knowledge proofs across all rounds of the protocol,
it is likely possible for any observer to see the exact winning probabilities of participants.
All participants choosing the same buy-in is then the only Nash-equilibrium,
which also happens to be Pareto efficient.
Choosing different buy-ins instead would always reduce the size of the anonimity set.

A simple and general way of achieving unlinkability is to break the link immediately between the address purchasing the ticket and the address or addresses participating in the lottery.
As noted before, without any other measures, this requires equal weights for all participants.
This construction is basically equivalent to routing all buy-in payments through a coin mixer.

Compared to the standard variant of \sysname, this protocol has an additional round,
a dedicated ticket purchasing phase before the commitment phase.
In the commitment phase a zero-knowledge proof (ZKP) is published that links a purchased ticket,
i.e., the deposit of the buy-in, to a commitment value, a payout address, and optionally additional participant or relayer addresses.
The linked values are passed along as public inputs to the statement of the ZKP.
This requires the ZKP to be \emph{non-malleable} (see \cref{def:zkp}).
Otherwise, someone could maliciously post a similar proof where they change the payout address to their own.

If a participant intends to use a relayer,
funds to cover their fees should be deposited in addition to the buy-in.
To give further guarantees to the relayer, the ZKP could also include the address of the intended relayer.
Instead of paying the relayer fee to the first party to publish the required value,
it is only paid out if the intended relayer publishes.
This shields the relayer from potential front-running by other network participants,
including validators or block producers of the underlying blockchain.


\begin{figure}[!t]
    \footnotesize
    \begin{algorithmic}
        \State $B$ \Comment{bulletin board}
        \State $P$ \Comment{player IDs (initially empty)}
        \State $N$ \Comment{used nonces (initially empty)}
        \Statex
        \Function{BuyTicket}{$id, s, n$} \Comment{$id$ is the purchase address}
            \State \textbf{assert}\ $\textsf{BuyTicket} \in \textsc{CurrentPhase}(B)$
            \State $y \gets h(s || n)$
            \State $B.\mathsf{DepositFrom}(id, 1)$
            \State $\textsc{Merkle.AddLeaf}(y)$
        \EndFunction
        \Statex
        \Function{Commit}{$id, x(n, id), \pi, V$} \Comment{$id$ is the payout address}
            \State \textbf{assert}\ $\textsf{Commit} \in \textsc{CurrentPhase}(B)$
            \State \textbf{assert}\ $n \not\in N$
            \State \textbf{assert}\ $\textsc{ZKP.Verify}(\pi, x)$
            \State $P \gets P \cup \{id\}$
            \State $N \gets N \cup \{n\}$
            \State $\sysname.\textsc{Commit}(id, V)$ \Comment{excluding the buy-in deposit}
        \EndFunction
        \Statex
        \Function{CurrentPhase}{$B$}
            \If{$B.\textsf{Time}() \le \textsf{StartTime} + \textsf{TimePerRound}$}
                \State \Return $\{\textsf{BuyTicket}\}$
            \ElsIf{$B.\textsf{Time}() \le \textsf{StartTime} + 2 \cdot \textsf{TimePerRound}$}
                \State \Return $\{\textsf{Commit}\}$
            \ElsIf{$B.\textsf{Time}() \le \textsf{StartTime} + 3 \cdot \textsf{TimePerRound}$}
                \State \Return $\{\textsf{Open(0)}\}$
            \ElsIf{$B.\textsf{Time}() \le \textsf{StartTime} + (3+x) \cdot \textsf{TimePerRound}$}
                \State \Return $\{\textsf{Settle(x-1)}, \textsf{Open(x)}\}$
            \EndIf
            \State \Return $\{\textsf{Settle}\}$
        \EndFunction
    \end{algorithmic}
    \caption{\sysnamezkp unlinkable lottery.
        Extends \sysname, omitted functions are the same as shown in \cref{alg:multi-winner-lottery}.}
    \label{alg:unlinkable-lottery}
\end{figure}

\begin{theorem}
    The protocol shown in \cref{alg:unlinkable-lottery} is unlinkable, in the sense of \cref{def:unlinkability}.
\end{theorem}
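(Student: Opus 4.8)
The plan is to treat the construction in \cref{alg:unlinkable-lottery} as a nullifier-based mixer and reduce unlinkability to three primitive-level guarantees: the zero-knowledge and non-malleability of the proof system (\cref{def:zkp}) and the preimage resistance of the hash $h$. First I would fix the adversary's entire view: the set of \textsc{BuyTicket} records (each a purchase address together with a Merkle leaf $y = h(s \| n)$), the set of \textsc{Commit} records (each a payout address $id$, a revealed nonce $n$, a commitment $V$, and a proof $\pi$), and the subsequent lottery transcript of openings and settlements. The goal is to bound, for any entry, the posterior probability that it was funded by a given purchase address.

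The core is a hybrid argument that strips every piece of linking information out of this view. First I would invoke the zero-knowledge property to replace each honestly generated $\pi$ by a simulated proof produced without the witness; by \cref{def:zkp} the resulting view is computationally indistinguishable, and in it the proofs reveal nothing about which leaf, hence which purchase, each commit uses. Next I would argue that the remaining published fields carry no buy-side information: the commitment $V$ and the lottery transcript depend only on the payout side and, because all participants have equal winning probability, the lottery is symmetric in its entries; and the revealed nonce $n$ is independent of the \textsc{BuyTicket} data because at purchase time only $h(s \| n)$ appears on the board, so by preimage resistance no observer can match a revealed nonce to a leaf without knowing the corresponding secret $s$.

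Once all linking fields are neutralized, the only correlation left between purchases and entries is the bijection induced by the requirement that a user who bought $k$ tickets commits exactly $k$ entries; conditioned on the view, this bijection is a uniformly random matching. The claimed bounds then follow by counting: with no side information each of the $n$ entries is equally likely to be one of the target's $k$, giving confidence at most $k/n$; and if the adversary already knows the identities behind $f \le n-k$ entries, the target's $k$ entries are uniform among the $n-f$ unknown ones, giving confidence $k/(n-f)$, exactly as in \cref{def:unlinkability}.

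The main obstacle is the dual appearance of the nonce: it is committed inside the leaf $h(s \| n)$ at purchase time and revealed in the clear at commit time, so I must argue carefully that this reveal leaks nothing, relying on preimage and second-preimage resistance rather than on any hiding of $n$ in the transcript itself. The second delicate point is that the anonymity-set structure assumed in the counting step must actually hold against an active adversary: here I would use non-malleability (\cref{def:zkp}) to rule out an adversary copying an honest user's proof under a substituted payout address, which would otherwise let it steal an entry, shrink the honest target's effective $k$, and thereby inflate its confidence. Establishing that honest commits cannot be diverted is what upgrades the information-theoretic counting argument into a guarantee that holds in the adversarial setting.
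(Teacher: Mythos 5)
Your proposal is correct and follows essentially the same route as the paper's (much terser) proof: the zero-knowledge property of the proof $\pi$ severs the link between purchase and commit, so the posterior over purchase addresses reduces to counting entries, yielding the $k/n$ and $k/(n-f)$ bounds of \cref{def:unlinkability}. The extra care you take --- the explicit simulation hybrid, arguing that the revealed nonce $n$ leaks nothing because only $h(s \,\|\, n)$ appears at purchase time, and invoking non-malleability to stop an active adversary from diverting an honest commit and shrinking the target's anonymity set --- addresses points the paper handles only in the discussion surrounding the theorem (or leaves implicit), not in the proof itself, and strengthens rather than departs from its argument.
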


\begin{proof}
    The zero-knowledge property of the non-interactive zero-knowledge proof ensures that only the truth value of the statement is revealed.
    Namely, it is only revealed that the address activating the ticket is controlled by one of the entities that purchased a ticket.
    So, after the commitment round (and without any external indicators),
    any participants likelihood to belong to any purchasing address is proportional to the number of entries made by that address.
\end{proof}

As already discussed above, this approach does not generalize well to arbitrary winning probabilities.
To this end, we also define a weaker notion of unlinkability,
for which a participant's anonymity depends on the cooperation of the opponents they face during the tournament.
However, we can show how weak unlinkability can be achieved for arbitrary winning probabilities.

\begin{definition}[Weak Unlinkability]
\label{def:weak-unlinkability}
    Consider a multiparty lottery with $n$ participants with weighted winning probabilities.
    Given that a targeted user makes $k \le n$ entries to the lottery with weights $w_1, \ldots, w_k$ from the same address.
    Without loss of generality, call the other weights $w_{k+1}, \ldots, w_n$.
    Assume all participants honestly follow the protocol.
    An outside observer of the lottery can have no more confidence than is implied by the two sets of weights that any payout address in the final ranking is associated with the targeted address.
\end{definition}

It is important to highlight that (strong) unlinkability also depends on honest cooperation of other participants.
In the extreme case, if all other participant reveal their identities,
this inadvertently reveals the identity of the sole remaining participant.
More generally, any participant revealing their identity reduces the anonymity set
and thus increases the confidence in the identities of the remaining participants.

The following protocol achieves weak unlinkability and applies to arbitrary winning probabilities and arbitrary payout functions.
As opposed to the protocol for (strong) unlinkability,
it also only requires a digital signature scheme and a commitment scheme (e.g. hash commitments).

\begin{theorem}
    The protocol shown in \cref{alg:weakly-unlinkable-lottery} is weakly unlinkable, in the sense of \cref{def:weak-unlinkability}.
\end{theorem}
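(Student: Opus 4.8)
The plan is to prove weak unlinkability through a symmetry (indistinguishability) argument on the observer's view, rather than a reduction to a single cryptographic game. First I would fix the observer's view as the entire bulletin-board transcript of an honest execution of \cref{alg:weakly-unlinkable-lottery}: the ticket purchases together with their public weights, all published commitments, all revealed openings, every swapper settlement, and the final assignment of payout addresses to ranking positions. The goal, per \cref{def:weak-unlinkability}, is to show that conditioned on this transcript the observer's posterior over which payout address is associated with the targeted purchasing address coincides exactly with the prior induced by the two weight multisets $\{w_1, \ldots, w_k\}$ and $\{w_{k+1}, \ldots, w_n\}$.

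The core step is a single-swapper lemma isolating where the cooperative sub-protocol severs the link between an incoming and an outgoing identity. When two honest opponents meet at a swapper, they jointly rerandomize the payout address carried forward by the winner, so that, given the published transcript, the address emerging on either output wire is independent of which incoming address produced it, beyond the swap probability dictated by the weights. I would state this as: for a single honest swapper, the pair of outgoing payout addresses together with their committed and opened data is identically distributed under the two possible incoming assignments, conditioned on the observed swap outcome. The hiding property of the commitment scheme makes the pre-opening commitments leak nothing, and the cooperative blinding makes the opened data symmetric; this is the only point where a computational rather than information-theoretic gap enters, and I would flag it as such.

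I would then lift the single-swapper lemma to the whole network by induction on depth, using the hypothesis that all participants honestly follow the protocol so that every cooperative step succeeds. Concretely, for any two purchasing-to-payout labelings consistent with the weight multisets I would build a measure-preserving bijection on the participants' private randomness that maps an execution producing transcript $\tau$ under one labeling to an execution producing the same $\tau$ under the other; composing the per-swapper symmetries along the paths of the network yields such a bijection, and equal transcript probabilities give equal posteriors, which is exactly the claim. Correctness of the underlying shuffling network (established earlier) guarantees that the set of labelings reachable this way is precisely the weight-implied equivalence class, so the posterior is neither broader nor narrower than \cref{def:weak-unlinkability} allows.

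The main obstacle I expect is the bookkeeping of correlations across swappers: a single payout address flows through several swappers, so the rerandomizations introduced at earlier layers must be shown not to leak through the settlements of later ones. This forces the per-swapper blinding to be fresh and independent at every node, and the induction to carry a sufficiently strong hypothesis, namely that the joint distribution of all data on a sub-network is invariant under relabeling its inputs within the weight-consistent class, rather than merely a marginal at each output wire. A secondary subtlety is checking that the digital signatures authorizing openings and payouts are produced under the fresh payout keys and hence do not themselves re-link an outgoing address to its purchasing address; this I would dispatch quickly once the blinding invariant is established.
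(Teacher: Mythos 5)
Your proposal is correct and takes essentially the same route as the paper's proof: a per-swapper observation that an honest cooperative opening publishes only a fresh hiding commitment plus signatures from both parties (hence a transcript symmetric under the two possible winner assignments), composed over every two-party game of the tournament --- the paper asserts exactly this, only far less formally, noting that the new commitment is unlinkable when $v$ is high-entropy or a randomizer is added per iteration. Two details to align with \cref{alg:weakly-unlinkable-lottery}: the protocol carries forward a fresh commitment value $c_\mathsf{new}$ rather than rerandomizing a payout address, and the signatures leak nothing by symmetry because both parties sign the same message regardless of outcome (not because of fresh payout keys); moreover, since hiding is only computational --- as you yourself flag --- the final lifting step must be a hybrid argument reducing to commitment hiding rather than a literally measure-preserving bijection on private randomness.
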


\begin{proof}
    In the case where both parties complete the two-party protocol honestly,
    including the cooperative opening sub-protocol,
    no information about the winner is leaked except for their new commitment value.
    If $v$ is a high-entropy value or if an additional randomizer is added to each iteration of the chained commitment scheme,
    it is not computationally feasible to determine the winner from the new commitment value alone.
    So unless either party reveals their own identity,
    e.g., by publishing their secret values,
    the outcome remains ambiguous for an outsider.
    This is true for any of the two-party lotteries in the tournament.
\end{proof}

Note, that in the given protocol opponents faced higher up in the tournament tree can have a larger impact on the anonymity set.
By revealing their identity they can remove entire sub-trees from the anonymity set.
In particular, if the winner's final opponent reveals their identity,
they can eliminate half of the participants from the anonymity set of potential winners.

\begin{figure}[!t]
    \footnotesize
    \begin{algorithmic}
        \State $B$ \Comment{bulletin board}
        \State $P$ \Comment{player IDs}
        \Statex
        \Function{Open}{$id, x$}
            \If{$\lnot\textsc{CooperativeOpen}(id, i, c_\textsf{new}, v)$}
                \State $\textsc{UnilateralOpen}(id, i, c_\textsf{new}, v)$
            \EndIf
        \EndFunction
        \Statex
        \Function{CooperativeOpen}{$id, r, m, c_\textsf{new}, v)$} \Comment{preserves privacy}
            \State $\textbf{send}\ c_\textsf{new}, v\ \textbf{to}\ \text{opponent}$
            \State $\textbf{wait for}\ c_\textsf{opp}, v_\textsf{opp}$
            \If{timeout while waiting}
                \State \Return \textbf{false}
            \EndIf
            \State $\textsf{winner} \gets v \oplus v_\textsf{opp} \mod 2$
            \If{$id > id_\textsf{opp} \land \textsf{winner} = 1$}
                \State $c \gets c_\textsf{new}$
            \Else
                \State $c \gets c_\textsf{opp}$
            \EndIf
            \State $\sigma \gets \textsf{Sign}(\textsf{winner})$
            \State $\textbf{send}\ \sigma\ \textbf{to}\ \text{opponent}$
            \State $\textbf{wait for}\ \sigma_\textsf{opp}$
            \If{timeout while waiting}
                \State \Return \textbf{false}
            \EndIf
            \If{$id > id_\textsf{opp}$}
                \State $B.\textsf{Publish}(\textsf{Open}(r, m), (c, \sigma, \sigma'))$
            \Else
                \State $B.\textsf{Publish}(\textsf{Open}(r, m), (c, \sigma', \sigma))$
            \EndIf
        \EndFunction
        \Statex
        \Function{UnilateralOpen}{$id, r, c_\textsf{new}, v$} \Comment{reduces privacy}
            \State $\sysname.\textsc{Open}(id, r, c_\textsf{new}, v)$
        \EndFunction
        \Statex
        \Function{SettleMatch}{$r, m$}
            \State $(i, j, p) \gets T[r][m]$
            \If{$(c_\textsf{new}, \sigma_i, \sigma_j) \gets B.\textsf{Read}(\textsf{Open}(r, m))$}
                \If{$\textsf{VerifySig}(\sigma_i, P[i].pk, c_\textsf{new}) \land \textsf{VerifySig}(\sigma_j, P[j].pk, c_\textsf{new})$}
                    \State $P[j].commitment = c_\textsf{new}$
                    \State \Return
                \EndIf
            \EndIf
            \State $\sysname.\textsc{SettleRound}(r, m)$
        \EndFunction
    \end{algorithmic}
    \caption{\sysnamewul weakly unlinkable lottery with optional cooperative opening mechanism.
        Extends \sysname, omitted functions are the same as shown in \cref{alg:multi-winner-lottery}.}
    \label{alg:weakly-unlinkable-lottery}
\end{figure}

The opening mechanism of \sysnamewul can be seen as a drop-in replacement for the two-party lottery.
It can not just be used as part of \sysname.
It could also replace the two-party lottery itself or its use in the simple single-winner protocol.
Most importantly, it can also be used as a drop-in replacement in the case with unequal winning probabilities.
So, applying the \sysnamewul construction to the single-winner multiparty lottery with weighted winning probabilities,
makes it weakly unlinkable.
Also, it does not only turn a lottery into a weakly unlinkable one.
It also reduces the number of transactions required in the good case by almost half.
This comes at the disadvantage of requiring direct communication between participants.

\section{Evaluation}
\label{sec:evaluation}

We implement the \sysname and \sysnamewul protocols in Sui Move~\cite{sui_move} for the Sui~\cite{sui_lutris} blockchain.
These implementations are evaluated based on the transaction fees incurred.
Sui's object model differentiates between owned objects and shared objects.
This is potentially a good fit for \sysname,
since most of the individual two-party lotteries could be evaluated independent of each other.
More importantly, Sui supports a high transaction throughput.
For example, it executes transactions operating on disjoint sets of objects concurrently.
This allows it to support lottery protocols like ours, even with many participants,
better than for example Ethereum would.

\subsection{Implementation}

Sui imposes limitations on total number of bytes any vector (dynamically sized array) can hold.
To allow for potentially very large numbers, e.g. millions, of participants,
we use a BigVector library~\cite{typus_bigvector} developed by Typus Labs.
It uses multiple vectors under the hood,
allowing the resulting data structure to grow past the limits imposed by the underlying system.
The BigVector instances have a parameter (slice size) that needs to be tuned based on the number of participants that should be supported.

As we discussed, committing and opening values can be considered as publishing,
i.e., broadcasting values with minimal requirements for ordering.
The only ordering that needs to happen is with respect to the low-fidelity clock.
This can be made possible by a simple reliable broadcast mechanism.
In the Sui object model this would correspond to using owned objects in most places.
Our current implementation does not yet take advantage of this
and instead stores the entire state of the lottery in a single shared object.

\subsection{Setup}

For evaluation we run the Sui blockchain~\cite{sui_blockchain} in a local testnet.
Interacting with the blockchain and simulating the participants of the lottery
via a Rust implementation based on the Sui Rust SDK~\cite{sui_rust_sdk}.

\subsection{Results}

\begin{table}[!t]
    \centering
    \begin{tabular}[t]{lccc}
        \toprule
        \textbf{Lottery} & \textbf{Txs} & \textbf{Fees [SUI]} & \textbf{Fees [US\$]} \\
        \midrule
        single-winner           & 3,072            & 2.39    & 1.94    \\
        $\text{sequential}^{*}$ & $> 2 \cdot 10^6$ & > 2,300 & > 1,900 \\
        \sysname                & 11,266           & 11.5    & 9.31    \\
        \sysnamewul             & 6,145            & 5.64    & 4.58    \\
        \bottomrule
    \end{tabular}
    \caption{Practical gas fees and transaction counts of our implementations for the Sui blockchain.
        The given measurements are the totals for running a lottery with 1,024 participants.
        (*) Numbers for the naive sequential lottery are extrapolated from the single-winner case.}
    \label{tab:gas-fees}
\end{table}

\cref{tab:gas-fees} shows example executions of our implementations with 1,024 participants.
These numbers assume the default gas price of 1,000 MIST ($10^{-6}$ SUI),
which is slightly higher than on Sui's mainnet,
and an --- as of this writing --- current price of \$0.812 per SUI.
It can be seen that the absolute gas usage and absolute transaction fees (per user) are relatively low.
Consider that we want to pay at most 1\% in transaction fees.
With the above example of 1,024 participants,
this is achieved with buy-ins of at least \$0.19, \$0.91 and \$0.45
for single-winner, \sysname and \sysnamewul respectively.

\begin{figure}[!t]
    \centering
    \includegraphics[width=\linewidth]{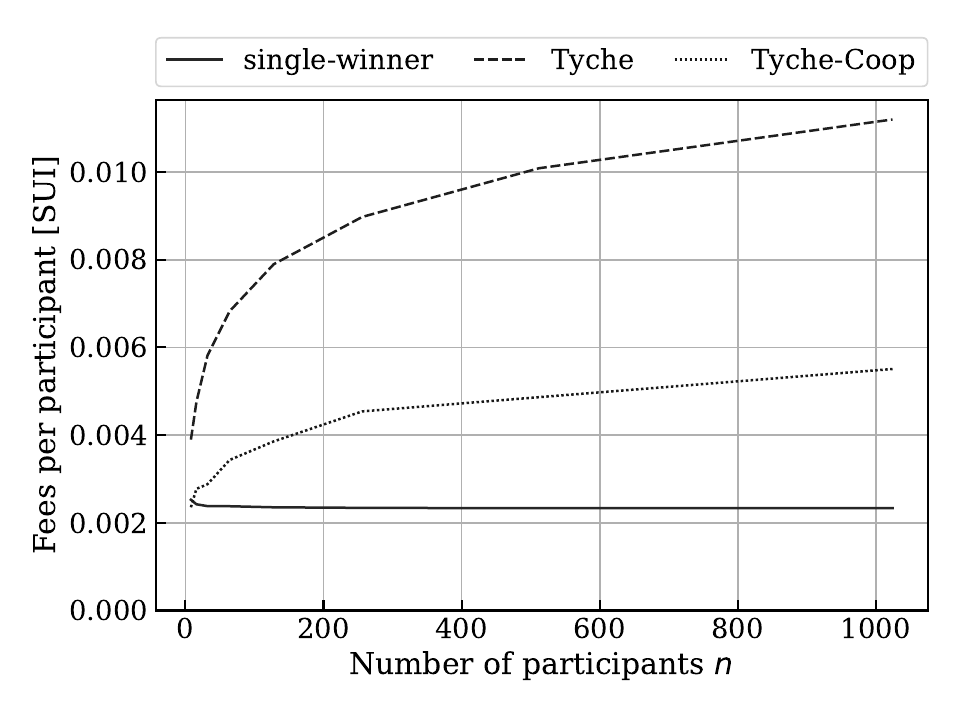}
    \caption{Fees paid by each individual participant as a function of the total number of participants for all of our protocol implementations.}
    \label{fig:gas-fees}
\end{figure}

Further, we see in \cref{fig:gas-fees} how transaction fees paid per participant scale roughly logarithmically in the number of lottery participants for the multi-winner lotteries.
This indicates that the protocols are scalable to many participants,
still with reasonable transaction fees.
It is also interesting to see how much the cooperative opening mechanism of \sysnamewul can reduce transaction fees.
In the experiments above we consider the best case, i.e., all participants participate in the cooperative opening.
This suggests that using \sysnamewul could also be an economic decision, not just one based on privacy concerns.

\section{Conclusion}
\label{sec:conclusion}

In this work we considered how to construct lottery protocols under very weak assumptions.
Our model allows for majority coalitions, does not require any semi-honest third-party (even for liveness),
and does not require participants to post collateral.
We introduced the notion of shuffling networks for thinking about a specific class of lottery protocols.
Based on that, we presented a general framework for constructing multiparty lottery protocols.
Specifically, we gave constructions for the most interesting cases, i.e.,
weighted single-winner lotteries and unweighted multi-winner lotteries.

Many previous protocols relied on a third party to ensure liveness if not even security
or relied on security deposits and punishing misbehaving participants.
Previous work in the same setting did not support arbitrary payout functions (i.e. the multi-winner setting).
\sysname can be seen as a generalization of all lottery protocols in the same setting.
These previous protocols~\cite{two_player_lotteries,zero_collateral_lotteries,purelottery,purelottery_thesis} can be seen as solving certain special cases of \sysname.

\subsubsection*{Future Work}

Further constructions of specific shuffling networks may be studied,
especially constructing a perfect shuffling network of $o(n)$ (ideally $\mathcal{O}(\log n)$) depth.
Any monotonic shuffling network immediately yields corresponding instances of \sysname, \sysnamezkp and \sysnamewul.

Another interesting theoretical problem is whether the general \sysname shuffling network for arbitrary payout functions
can be generalized to support weighted winning probabilities at the same time.
For this an efficient algorithm, or ideally a closed form solution for assigning probabilities to the swappers would be required.

\label{lastpagebeforerefs}

\section*{Research Ethics}

We do not work with any live systems or analyze vulnerabilities in any existing protocols.
This work simply introduces new protocols applying cryptography and game-theory to achieve an
improvement regarding their properties of security, game-theoretic fairness and privacy.
Compared to previous realizations,
our protocols provide the strongest guarantees regarding fairness and transparency.
This gives a direct improvement in these metrics to users when replacing another protocol.

Like any other protocols solving the same problem,
our work may be used to realize digital implementations of games of chance,
where users are able to wager their money.
We see this as a reasonable use of the technology as long as it is handled correctly.
The service should only be offered in a place and manner that is adheres to legal restrictions on users and providers.
In this context, users should always be clearly informed about their odds as well as any fees.
Furthermore, this is by far not the only possible application of the protocols.
They can in general be used as mechanisms for randomly distributing scarce resources
across a population that has a shared utility function over these.

The section on unlinkable lotteries introduces new privacy protocols.
This privacy-enhancing tehcnology protects participants from potential harm or targeting that could arise if their participation or win status were disclosed.
Similar to other privacy preserving payment technologies,
one of the main concerns is that the same feature that protects individual privacy
can be exploited for malicious purposes, such as money laundering or other illicit activities.

\section*{Open Science}

The entire source code resulting from this work will be made available under a permissive open-source license.
This includes the smart contract source code of the lotteries written in the Sui Move language
and the testing and evaluation framework written in Rust using the Sui SDK.
Parts of this implementation are based on code written by Mysten Labs and Typus Labs,
which is also each freely available under the Apache 2.0 license.


\bibliographystyle{plain}
\bibliography{references}

\appendix

\end{document}